\newcommand{\bi}{\begin{itemize}}
\newcommand{\ei}{\end{itemize}}
\newcommand{\ben}{\begin{enumerate}}
\newcommand{\een}{\end{enumerate}}
\newcommand{\bc}{\begin{cases}}
\newcommand{\ec}{\end{cases}}
\newcommand{\bd}{\begin{description}}
\newcommand{\ed}{\end{description}}
\newcommand{\be}{\begin{equation}}
\newcommand{\ee}{\end{equation}}
\newcommand{\bea}{\begin{eqnarray}}
\newcommand{\eea}{\end{eqnarray}}
\newtheorem{thm}{Theorem}
\newtheorem{propos}{Proposition}
\newtheorem{algo}{Algorithm}
\theoremstyle{plain}
\newtheorem{remark}{Remark}
\newcommand{\nn}{\nonumber}
\begin{document}

\title{
Cross-layer estimation and control\\for Cognitive Radio:
\\
Exploiting Sparse Network Dynamics
}

\author{Nicol\`{o}~Michelusi~and~Urbashi~Mitra\\(Invited Paper)
\thanks{
This research has been funded in part by the following grants:
ONR N00014-09-1-0700, AFOSR FA9550-12-1-0215, DOT CA-26-7084-00, NSF CCF-1117896, NSF CNS-1213128, NSF CCF-1410009, NSF CPS-1446901.
}
\thanks{N. Michelusi and U. Mitra are with the Dept. of Electrical Engineering, University of Southern California. email: michelus@usc.edu.}
\vspace{-20mm}
}

\maketitle


\begin{abstract}
In this paper, a cross-layer framework to jointly optimize spectrum sensing and scheduling in resource constrained agile wireless networks is presented. A network of secondary users (SUs) accesses portions of the spectrum left unused by a network of licensed primary users (PUs). A central controller (CC) schedules the traffic of the SUs, based on distributed compressed measurements collected by the SUs. Sensing and scheduling are jointly \emph{controlled} to maximize the SU throughput, with constraints on PU throughput degradation and SU cost.  The sparsity in the spectrum dynamics is exploited: leveraging a prior spectrum occupancy estimate, the CC needs to estimate only a residual uncertainty vector via sparse recovery techniques.  The high complexity entailed by the POMDP formulation is reduced by a low-dimensional belief representation via minimization of the Kullback-Leibler divergence. It is proved that the optimization of sensing and scheduling can be decoupled.   A \emph{partially myopic} scheduling strategy is proposed for which structural properties can be proved showing that the myopic scheme allocates SU traffic to likely idle spectral bands.   Simulation results show that this framework balances optimally the resources between spectrum sensing and data transmission. This framework defines sensing-scheduling schemes most informative for network control, yielding energy efficient resource utilization.
\end{abstract}
\vspace{-5mm}
\section{Introduction}
The recent proliferation of mobile devices has been exponential in number as well as heterogeneity \cite{CISCO}.
 As mobile data traffic is expected to grow 13-fold, and machine-to-machine traffic will experience a 24-fold increase from 2012 to 2017 \cite{CISCO}, tools for the design and optimization of \emph{agile} wireless networks
is of significant interest \cite{pcast}. Furthermore, network design needs to explicitly consider the resource constraints typical of wireless systems.
These resource constraints will impact the acquisition of network state information, which is essential for network control.

In this paper, we consider a wireless network composed of a licensed network of primary users (PUs) dynamically accessing a spectrum with $F$ frequency bands,
and an agile network of secondary users (SUs) which opportunistically attempt to access the portion of the spectrum left unused by the PUs \cite{Meng}.
The spectrum occupancy is inferred by a central controller (CC), by aggregating compressed spectrum measurements
collected   in a \emph{distributed} fashion by the SUs,
and by overhearing feedback signaling from the PUs.
Accordingly, the CC allocates the traffic of the SUs across the spectrum bands.
Joint sensing-scheduling policies are defined,
   so as to  maximize the SU throughput, under constraints on the throughput degradation caused to the PUs and
   on the sensing-transmission cost incurred by the SUs. 
   
The contributions of this paper are as follows.
We propose a  framework 
which captures the interplay between \emph{sensing and scheduling}, 
by trading off the cost of acquisition of network state information and the overall network performance.
\emph{Spectrum sensing} is done by collecting \emph{compressed} spectrum measurements from distributed SUs
and local feedback at the CC. based on it, \emph{spectrum scheduling} decisions are done.
This is in contrast to standard formulations based on partially observable Markov decision processes (POMDPs) \cite{POMDP},
where observations are \emph{passively} generated by control actions, rather than \emph{actively controlled} via sensing.
We provide a motivational example for the case of a single spectrum band and noiseless sensing in Sec. \ref{motivation}, 
which highlights the need for adaptivity in a cross-layer and resource constrained environment,
and then extend the model to the general case. 
For the general case, in Sec. \ref{optimization},
 we show that the joint sensing-scheduling policy can be optimized via dynamic programming (DP); we prove the optimality of a two-stage decomposition,
 which exploits the sufficient statistics that drive the decision making process (Theorem \ref{suffstat}), and
 allows one to decouple the optimization of sensing and scheduling (Algorithm \ref{aDP}).
Additionally, in order to reduce the huge action space in the spectrum scheduling phase, 
we propose a \emph{partially myopic} scheduling scheme, where the total traffic of the SUs is determined optimally via DP, whereas
 the allocation of the resulting total budget across frequency bands is determined via a myopic maximization of the instantaneous trade-off between PU and SU throughput.
We prove structural properties of the partially myopic scheduling scheme, showing that it effectively allocates the SU traffic to the spectrum bands more likely to be idle, thus minimizing interference  to the PUs and maximizing SU throughput,
and that it can be solved efficiently using standard convex optimization tools (Theorem \ref{thm2}).

In order to tackle the high complexity of the DP algorithm \cite{Bertsekas2005},
in Sec. \ref{complereduction} we propose complexity reduction techniques.
We employ a compact state space representation by projecting the belief onto a low-dimensional manifold via the minimization
of the Kullback-Leibler divergence (KLD, Theorem \ref{thm1}).
 Based on the \emph{compressed} belief,
we design adaptive compressive sensing (CS) schemes, which effectively exploit the  \emph{sparse} network dynamics typical of wireless networks.
In the spectrum sensing context analyzed in this paper, only few PUs join or leave the spectrum at any time,
so that the spectrum occupancy state exhibits sparse time variations.
Therefore, leveraging the estimate of the spectrum occupancy state in the previous slot,
only a \emph{sparse} residual uncertainty vector needs to be estimated, and
few measurements suffice to drive scheduling decisions.
Additionally, such representation allows us to design a state estimator based on sparse recovery algorithms.
  Although the focus of this paper is on spectrum sensing in agile wireless networks,
this framework can be generalized to more general networked systems, where the state of the system is a collection of \emph{features}, rather than spectrum bands (\emph{e.g.}, buffer state of all wireless nodes, or local channel quality), which evolve sparsely over time. These state features can be tracked
by collecting a few compressed measurements  via distributed sensing,
enabling more informed network control.
\vspace{-5mm}
\subsection{Related work}
There is significant prior work on cognitive radio and compressed sensing (CS); we have focused on the literature that is most relevant to our current problem framework.  Centralized schemes for the tracking of sparse time-varying processes have been examined in \cite{Schniter,Donoho,Asif} and distributed CS has been studied in \cite{Baron,Mota} for \emph{static} signals. In contrast to these two veins, we study distributed CS for time-varying signals.  Performance guarantees for recursive reconstruction of sparse signals under the assumption of slow support changes is studied in \cite{Vaswani}; however, joint sensing and control is not examined.  Recovery of static binary sparse signals via CS has been investigated in~\cite{Stojnic,Nakarmi}. Compressive spectrum sensing has been studied in~\cite{Meng}, for a static setting, and in  \cite{Bagheri}, for a dynamic setting with noiseless measurements, but without scheduling.  We do not focus on recovery guarantees herein, but rather embed CS into a control framework wherein the number of measurements is adapted based on prior information in order to drive traffic scheduling decisions.

Active sensor scheduling and adaptation \cite{Hero} encompass applications such as target tracking \cite{Atia,Krishnamurthy},
 physical activity detection \cite{Zois}, and sequential hypothesis testing \cite{Javidi}. All these prior works including ours \cite{MicheTSP1,MicheTSP2,MicheICASSP} assume that the underlying state is given by nature and is not controlled. In contrast, in this work, states are affected by scheduling decisions, via interference and collisions generated by the SUs to the PUs, and we design joint \emph{controlled} sensing, estimation and scheduling schemes in wireless networks, which account for the cost of acquisition of state information and its impact on the overall network performance. 
 
 Complexity reduction of  POMDPs via exponential family principal components analysis enables planning on a small dimensional manifold in \cite{Thrun}.  Model reduction of complex Markov chain models using the KLD  as a metric is investigated in \cite{Meyn}. In contrast, we develop a belief compression method based  on Neyman-Pearson formulation of the compressive spectrum sensing problem. Our scheme captures relevant features of the dynamic spectrum access problem, without having to learn key statistics.  As in \cite{Meyn}, the KLD measure is also used to project the true belief onto the low-dimensional manifold.

In this work, we assume that the PUs employ a retransmission process, which induces structure in the PU signal. This structure has been exploited in \cite{LevoratoIT} to design adaptive SU access techniques, and in \cite{mychain} to design smart interference cancellation techniques that exploit redundancy introduced by retransmission. In this work, instead, we exploit the structure in the signal as a result of sparse network dynamics, to design compressive spectrum sensing techniques and sparse recovery schemes. We extend the model studied in \cite{MicheGsip} to include a more general traffic model for the SU network, and propose a low-complexity solution based on the aforementioned partially myopic scheduling scheme.

This paper is organized as follows. In Sec. \ref{motivation}, we provide an example which motivates the need for adaptivity in a cross-layer and resource constrained environment, for the case with a single frequency band and noiseless sensing. In Sec.~\ref{model}, we present the system model for the general case with multiple frequency bands and noisy sensing. In Sec.~\ref{optimiz}, we present the optimization problem and, in Sec. \ref{optimization}, the proposed optimization techniques. In Sec. \ref{complereduction}, we present techniques for the complexity reduction based on belief approximation via KLD minimization and sparse recovery algorithms. In Sec.~\ref{sec:numres}, we present numerical results, and, in Sec.~\ref{conclu}, we conclude the paper. The proofs of the analytical results are provided in the Appendix.

\section{Motivation: single frequency band and noiseless sensing}
\label{motivation}
In this section, we provide an example which motivates  the need for adaptivity in a cross-layer and resource constrained environment,
by comparing the performance achieved by 
non-adaptive sensing strategies (Sec.\ref{nasensing}), with that achieved by adaptive schemes (Sec.\ref{asensing}).
In particular, we focus on the special case of a single frequency band and noiseless sensing.
Consider a network of $N_S$ SUs with sensing capability, which attempt to access a licensed channel (single frequency band), represented in Fig. \ref{TOYEX}.
Herein, for mathematical convenience, we use the approximation $N_S{\to}\infty$ to derive the transition probabilities and performance of the system.
The following discussion can be generalized to $N_S{<}\infty$.
  The channel occupancy state in slot $k$ is denoted as 
$ b_{k}\in\{0,1\}$, where $ b_{k}=0$ if the channel is idle and $ b_{k}=1$ if it is occupied by a PU.

The SUs opportunistically access the spectrum based on the traffic decision $ r_k$ broadcasted by the CC.
 Given $ r_{k}$, each SU, assumed to be backlogged,  transmits data independently with probability $ q_{k}= r_{k}/N_S$,
 incurring the transmission cost $c_{TX}$; otherwise, the SU remains idle, incurring no cost.
We employ a collision channel model, \emph{i.e.}, if more than one terminal (either SUs or PUs) transmits on the same channel, those packets 
cannot be decoded correctly at the corresponding receiver and are lost. Otherwise, if one and only one user transmits,
then the transmission is successful with probability $1{-}\rho_S$ (for the SU) and $1{-}\rho_P$ (for the PU). 
 This collision model represents a worst-case scenario, and thus provides performance guarantees. 
The value $ r_{k}=1$ maximizes the throughput for the SUs \cite{Abramson}, and any larger
value $ r_{k}>1$ degrades both the PU and SU throughputs, and incurs larger energy cost for the SUs.
We thus restrict $r_{k}$ to take values in $[0,1]$.

\begin{figure}[t]
\centering
\includegraphics[width = .6\linewidth,trim = 0mm 4mm 10mm 9mm,clip=false]{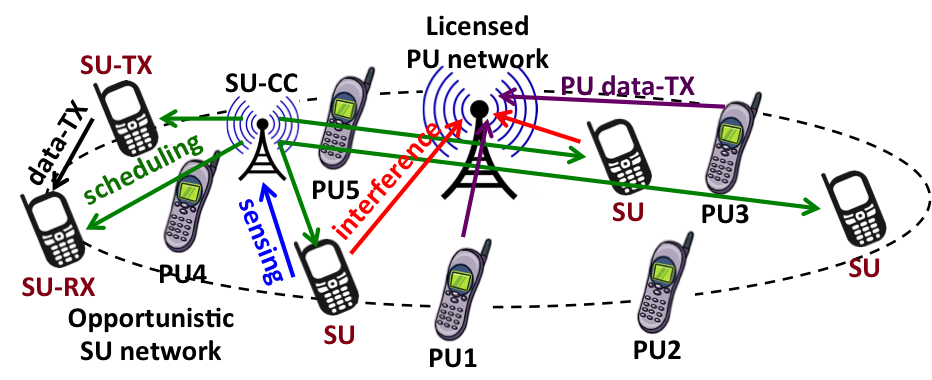}
\caption{Licensed network of PUs and opportunistic network of SUs. The SU-CC receives spectrum measurements and controls the SU network accordingly. SU transmissions generate interference to the PU network.}
\label{TOYEX}
\vspace{-9mm}
\end{figure}

The success probability for the PUs as a function of $ b_{k}$ and $ r_{k}$ is given by 
\begin{align}
 \label{PsuccPU}
&P_{succ}^{(P)}\left(  b_{k}, r_{k}\right)\triangleq  b_{k}(1-\rho_P)e^{- r_{k}},
\end{align}
where
 the probability of no collisions from the SUs satisfies
$(1{-} r_{k}/N_S)^{N_S}{\to}e^{- r_{k}}$ for $N_S{\to}\infty$.
Similarly, the probability of successful transmission for the SU system
 is given by
 \begin{align}
 \label{PsuccSU}
P_{succ}^{(S)}\left( b_{k}, r_{k}\right)\triangleq
(1- b_{k})(1-\rho_S) r_{k}e^{- r_{k}},
\end{align}
where the probability that one and only one SU transmits satisfies
$r_{k}\left(1- r_{k}/N_S\right)^{N_S-1}\to r_{k}e^{- r_{k}}$,
 and, if the channel is occupied by one PU, the transmission fails due to collisions.\footnote{Note that
 the analysis under  the asymptotic assumption $N_S{\to}\infty$ yields a good approximation even when $N_S$ is finite,   \emph{e.g.}, $N_S\simeq 10$.
 For instance, if $\rho_P=0$ and $r_k=1$ in (\ref{PsuccPU}), or $\rho_S=0$ in  (\ref{PsuccSU}),
 we obtain $P_{succ}^{(P)}\left(1,1\right)=P_{succ}^{(S)}\left(0,1\right)\simeq 0.368$ in the asymptotic case 
 and $P_{succ}^{(P)}\left(1,1\right)=P_{succ}^{(S)}\left(0,1\right)\simeq 0.349$ when $N_S=10$.}
 
 The PUs implement a retransmission mechanism in case of  transmission failure.
Retransmissions are performed in the same channel, in the next slot.
If the transmission is successful, then the PU occupying the channel either
has a new data packet to transmit in the next slot,
with probability $\theta$, or leaves the spectrum idle.
An idle channel is occupied by a new PU with probability $\zeta\in(0,1)$ and it remains idle otherwise.
Therefore, the state $b_{k}\in\{0,1\}$
is a two-state Markov chain, whose transition probabilities depend on the allocated SU traffic, $r_{k}\in[0,1]$.
The transition probability 
from state $ b_{k}{=}b$ to $ b_{k+1}{=}b^\prime$, given
$r_{k}{=}r$, denoted 
 as $P_{B}(b^\prime|b,r){=}\mathbb P( b_{k+1}{=}b^\prime| b_{k}{=}b, r_{k}{=}r)$, is given~by
 \begin{align*}
\!\!\!\!P_{B}(1|b,r)=(1-b)\zeta
{+}\left[b{-}(1{-}\theta)(1{-}\zeta)P_{succ}^{(P)}\left(b,r\right)\right]
\end{align*}
and $P_{B}(0|b,r)=1-P_{B}(1|b,r)$.
In fact, the channel is occupied in the next slot if and only if one of the following events occur:
the PU transmits successfully and it has a new data packet to transmit, with probability $\theta$;
 a new PU arrives, with probability $\zeta$;
or the transmission of the PU is unsuccessful and thus a retransmission is required.
 \vspace{-5mm}
 \begin{remark}
The retransmission protocol implemented by the PUs can also be exploited
 by leveraging the redundancy of the retransmission process, using a technique termed \emph{chain decoding} 
 \cite{mychain} to remove the interference of the PU signal over the retransmission windows of the PU.
 In this paper, we assume slot-by-slot decoding, so that the redundancy of the PU retransmission protocol
 is not exploited for interference cancellation. The extension is left for future work.
\end{remark}
\vspace{-5mm} 
\subsection{Non-adaptive spectrum sensing}
\label{nasensing}
The SU traffic $r_k$ is scheduled based on spectrum measurements collected by the SUs in a distributed fashion.
Consider a scheme where the SUs collect and report to the CC noiseless spectrum measurements
 at the beginning of slot $k$, with probability $\alpha=\psi /N_S$ independently in each slot, incurring the sensing-transmission cost $c_{S}$,
and they remain idle otherwise, incurring no cost.
The parameter $\psi\in[0,1]$ denotes the average SU \emph{sensing traffic}.
The SUs share a control channel to report their measurements, resulting in packet losses
if more than one SU transmits on the same channel.
The probability that the CC collects the spectrum measurement is thus given by $p_S=N_S \alpha(1-\alpha)^{N_S-1}\to \psi e^{-\psi}$ (for $N_S\to\infty$).

Assume that the SUs are not allowed to cause any degradation to the PU system. Then,
the SU traffic  is $r_k=r\in [0,1]$ in those slots where the channel is detected by the CC to be idle,
otherwise no traffic is allowed (in order to not interfere with the PUs).
In particular, if no measurement is collected, no SU transmissions are allowed, due to the uncertainty in the current
channel state.
The average long-term sensing and transmission cost incurred by the SUs, and the SU and PU throughputs are given by
\begin{align}
&\bar C_{sensing}(\psi,r)=\psi c_S,
&\bar C_{sched}(\psi,r)=\pi_P(0)\psi e^{-\psi} rc_{TX},
\\
&\bar T_S(\psi,r)=(1-\rho_S)\pi_P(0)\psi e^{-\psi}  r e^{-r},
&\bar T_P(\psi,r)=(1-\rho_P)\pi_P(1),
\end{align}
where $\pi_P(0)$ and $\pi_P(1)$ are, respectively, the steady-state probabilities of the channel being idle and occupied,
given by
\begin{align}
\pi_P(0)&=\frac{P_{B}(0|1,0)}{P_{B}(0|1,0)+P_{B}(1|0,r)}
=
\frac{(1-\theta)(1-\zeta)(1-\rho_P)}{(1-\theta)(1-\zeta)(1-\rho_P)+\zeta}
\end{align}
and $\pi_P(1)=1-\pi_P(0)$.
In fact, sensing is done independently in each slot, incurring the expected sensing cost $\psi c_S$.
If the measurement is received successfully (with probability $\psi e^{-\psi}$) and
 the channel is detected to be idle (with steady-state probability $\pi_P(0)$),
 then the data transmission cost $rc_{TX}$ is incurred in the scheduling phase, and the expected throughput achieved is $r e^{-r}$.
 
 We want to determine $(\psi^*,r^*)$ such that
\begin{align}
\label{opaasd}
(\psi^*,r^*)=\arg\max_{\psi,r}\bar T_S(\psi,r)
\ \text{s.t.}\ \bar C(\psi,r)\leq \bar C^{\max},
\end{align}
where we have defined the sensing-transmission cost $\bar C(\psi,r)\triangleq \bar C_{sensing}(\psi,r)+\bar C_{sched}(\psi,r)$, and
 $\bar C^{\max}\leq c_S+\pi_P(0)e^{-1}c_{TX}$ (achieved with $\psi=r=1$).
\emph{The
 above optimization problem allows us to define the joint sensing-scheduling strategy that balances optimally between the cost of acquisition of state information via distributed
sensing and the overall network goal of maximizing the SU throughput and, at the same time, avoiding interference to the PUs.}

Since both $\bar C(\psi,r)$ and $\bar T_S(\psi,r)$ are increasing functions of $\psi\in[0,1]$ and $r\in[0,1]$, under the optimal strategy
we have $\bar C(\psi^*,r^*)=\bar C^{\max}$, yielding the optimal $r$ as a function of $\psi$,
\begin{align}
&r(\psi)=\frac{\bar C^{\max}-\psi c_S}{\pi_P(0)\psi c_{TX}}e^{\psi},
\end{align}
where $\psi^*\leq\min\left\{\frac{\bar C^{\max}}{c_S},1\right\}$. 
 Hence $\psi^*$ can be determined as
\begin{align}
\psi^*=\underset{\psi\in \left[0,\min\left\{\frac{\bar C^{\max}}{c_S},1\right\}\right]}{\arg\max} \bar T_S(\psi,r(\psi)),
\end{align}
by exhaustive search.
When $\bar C^{\max}\ll c_S$, hence $\psi\ll 1$, we approximate $e^{\psi}\simeq 1$,
thus obtaining 
\begin{align}
\bar T_S(\psi,r(\psi))&\simeq(1-\rho_S)\frac{\bar C^{\max}-\psi c_S}{c_{TX}} e^{-\frac{\bar C^{\max}-\psi c_S}{\pi_P(0)\psi c_{TX}}}
\triangleq
\bar T_S^{(up)}(\psi,r(\psi)),
\end{align}
which represents an upper bound to $\bar T_S(\psi,r(\psi))$ for the general case. 
This upper bound can be optimized in closed form, 
yielding the upper bound optimizing $\psi^*$ and $r^*$,
\begin{align}
&\psi^*
=\min\left\{\frac{2 \bar C^{\max}/c_S}{1+\sqrt{1+4\pi_P(0)c_{TX}/c_S}},1\right\},
\\
&r^*=r(\psi^*)
=\frac{\bar C^{\max}-\psi^* c_S}{\pi_P(0)\psi^* c_{TX}}e^{\psi^*}.
\end{align}
\vspace{-10mm}
\subsection{Adaptive spectrum sensing-scheduling}
\label{asensing}
The above non-adaptive sensing strategy does not provide the best performance possible due to its static nature. Indeed, 
it may be beneficial to adapt the sensing strategy over time, \emph{i.e.}, by selectively sensing
the channel state based on the prior channel information, in order to make the best use of the scarce resources available to the SUs.
We now demonstrate the importance of using adaptive sensing schemes to optimize the performance of the system,
as a means to effectively
cope with the cost of acquisition of state information for network control.

Thus, we consider the scenario where the sensing traffic $\psi$ is adapted over time.
We denote the belief state at the CC as $(b,\tau)$, where
$\tau\geq 0$ denotes the number of slots since the last measurement was collected,
and $b\in\{0,1\}$ denotes the last channel state detected.
 For instance, $b_k=0$, $\tau_k=1$
 denotes that the spectrum was detected as idle in slot $k-1$.
  Let $\psi(b,\tau)\in[0,1]$ be the sensing traffic, \emph{i.e.}, the expected number of measurements collected by the network of SUs, when the state is $(b,\tau)$,
  so that the probability that a measurement is successfully collected is given by $p_S(b,\tau)=\psi(b,\tau)e^{-\psi(b,\tau)}$.
 We denote the \emph{prior steady-state distribution} (before sensing) that the belief is $(b,\tau)$ as $\pi(b,\tau)$;
 similarly, we denote
 the \emph{posterior  steady-state distribution} (after sensing)  that the belief is $(b,\tau)$ as $\hat\pi(b,\tau)$.
 The steady-state equations relating the prior to the posterior steady-state probabilities are given by
 \begin{align}
 \label{q1}
& \hat\pi(b,\tau)=\pi(b,\tau)(1-p_S(b,\tau)),\ \tau>0,
\\&
 \hat\pi(b,0)=\sum_{b^{\prime}\in\{0,1\}}\sum_{\tau=1}^{\infty}\pi(b^\prime,\tau)p_S(b^\prime,\tau)\mathbb P^{(\tau)}(b|b^\prime),
  \label{q2}
 \end{align}
 where $\mathbb P^{(\tau)}(b^\prime|b)$ is the $\tau$-step probability of transition of the channel from state $b^\prime$ to state $b$.
 In fact, the posterior belief $(b,\tau)$ for $\tau{>}0$ is reached from the prior belief $(b,\tau)$ if no measurement is successfully collected at the CC.
 On the other hand, the posterior belief $(b,0)$ is reached if the measurement is collected at the CC and the channel state $b$ is detected.
 
Similarly,  the steady-state equations relating the posterior to the prior steady-state probability in the next slot are given by
 \begin{align}
 \pi(b,\tau)=\hat\pi(b,\tau-1),\ \forall b\in \{0,1\},\forall \tau\geq1.
  \label{q3}
 \end{align}
 In fact, since we are moving to the next slot, the information about the last state detected becomes outdated by one more slot.
 By solving the system of equations (\ref{q1}-\ref{q3}), we obtain
  \begin{align}
&
\!\!\!\pi(1,\tau)=\frac{f(1)}{
\sum_{b\in\{0,1\}}f(b)\sum_{\tau=1}^{\infty}\prod_{i=1}^{\tau-1}(1-p_S(b,i))
}\prod_{i=1}^{\tau-1}(1-p_S(1,i)),\ \tau>1,
 \\&
\!\!\!\pi(0,\tau)=\frac{f(0)}{
\sum_{b\in\{0,1\}}f(b)\sum_{\tau=1}^{\infty}\prod_{i=1}^{\tau-1}(1-p_S(b,i))
}\prod_{i=1}^{\tau-1}(1-p_S(0,i)),\ \tau>1,
 \end{align}
 where we have defined
 \begin{align}
& f(b)=\sum_{\tau=1}^{\infty}\prod_{i=1}^{\tau-1}(1-p_S(1-b,i))p_S(1-b,\tau)\mathbb P^{(\tau)}(b|1-b).
 \end{align}
We thus obtain
\begin{align}
&\bar C(\psi,r)=
\sum_{b\in\{0,1\}}\sum_{\tau=1}^{\infty}\pi(b,\tau)\psi(b,\tau)c_S
+
\hat\pi(0,0)rc_{TX},
\\&
\bar T_S(\psi,r)=(1-\rho_S)\hat\pi(0,0)re^{-r},
\end{align}
where we have used the fact that data transmission occurs only when the spectrum is detected to be idle (state $(0,0)$),
with cost $rc_{TX}$ and instantaneous throughput $(1-\rho_S)re^{-r}$.

The goal is to define jointly the sensing-scheduling  policy $(\psi^*,r^*)$ solving (\ref{opaasd}).
The optimal policy can be determined via dynamic programming. For simplicity and for the sake of exposition, here we evaluate the performance of an heuristic adaptive sensing policy such that
$\psi(b,\tau)=\psi(b)$, hence $p_S(b,\tau)=p_S(b)$, \emph{i.e.}, the sensing probability is only adapted to the value of the last state detected, rather than the delay parameter $\tau$.
In this case, we obtain
\begin{align}
&\pi(b,\tau)=\frac{f(b)}{\frac{f(0)}{p_S(0)}+\frac{f(1)}{p_S(1)}}(1-p_S(b))^{\tau-1},\ \tau\geq1,b\in\{0,1\},
 \end{align}
 and therefore
  \begin{align*}
\bar C(\psi,r)=
c_S\left[\hat\pi(0,0)e^{\psi(0)}+\hat\pi(1,0)e^{\psi(1)}\right]
+\hat\pi(0,0)rc_{TX}.
\end{align*}
By optimizing numerically the SU throughput $\bar T_S(\psi,r)$ with respect to $(\psi(0),\psi(1),r)$,
we obtain the plot in Fig. \ref{fig:example}.a, where we also plot the non-adaptive sensing policy (unless otherwise stated, the parameters are given as in Sec. \ref{sec:numres}).
We observe that the adaptive scheme achieves twice as much SU throughput as the non-adaptive one, for low values of the cost budget; the lower cost budget  is typical for wireless systems.
In Fig. \ref{fig:example}.b, we plot the ratio between the sensing cost $\bar C_{sensing}$ and the total budget $\bar C^{\max}$.
For both schemes, more than $65\%$ of the resources is spent for sensing, and consequently less than $35\%$
 is used for SU data transmission.

\begin{figure}[t]
    \centering
    \subfigure[SU throughput versus total cost of sensing-scheduling.]
    {
\includegraphics[width=.45\linewidth,trim = 1mm 1mm 1mm 1mm,clip=true]{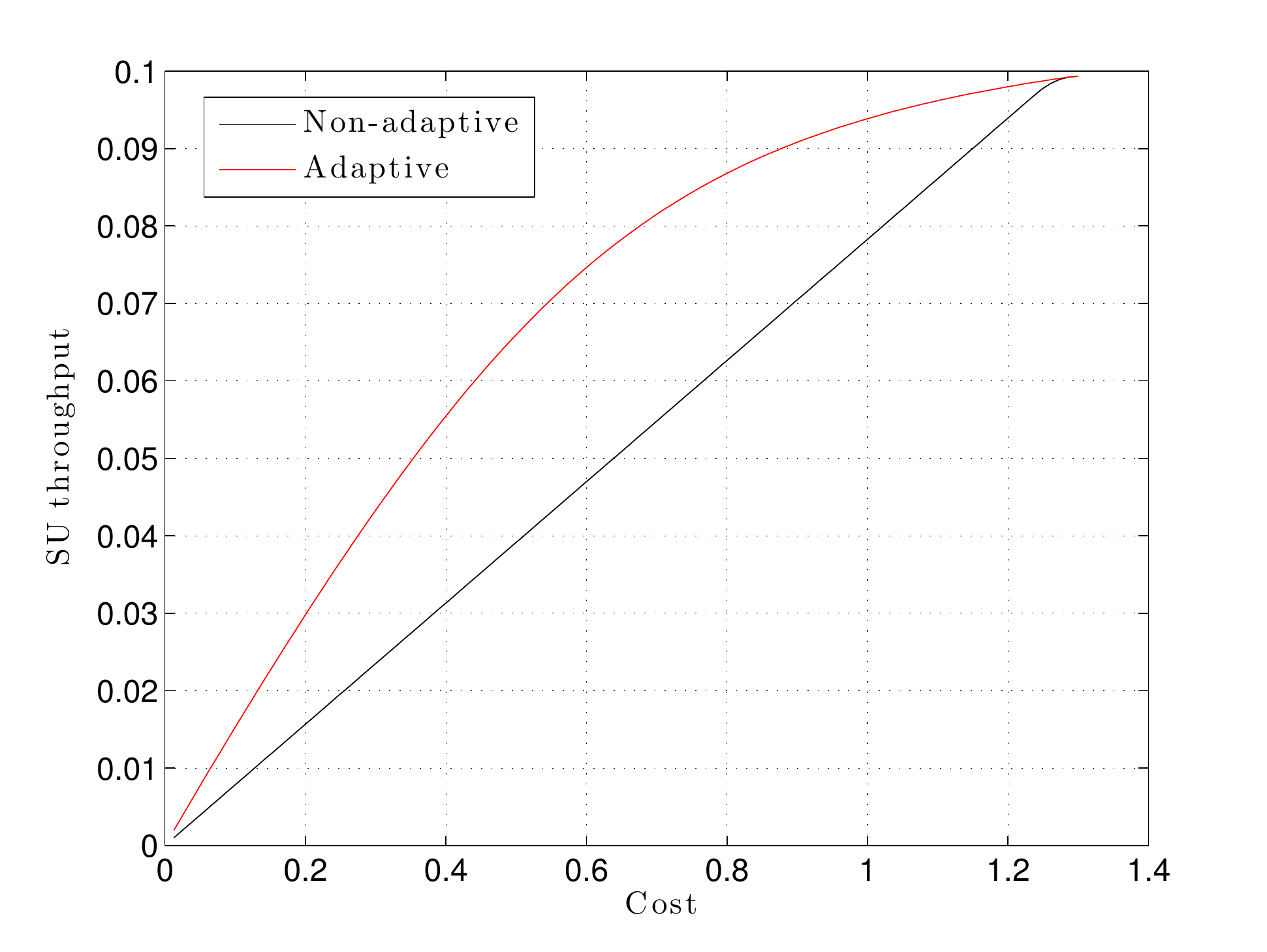}
}
    \subfigure[Ratio between sensing cost and total cost
 versus total cost of sensing-scheduling.]
    {
\includegraphics[width=.45\linewidth,trim = 1mm 1mm 1mm 1mm,clip=true]{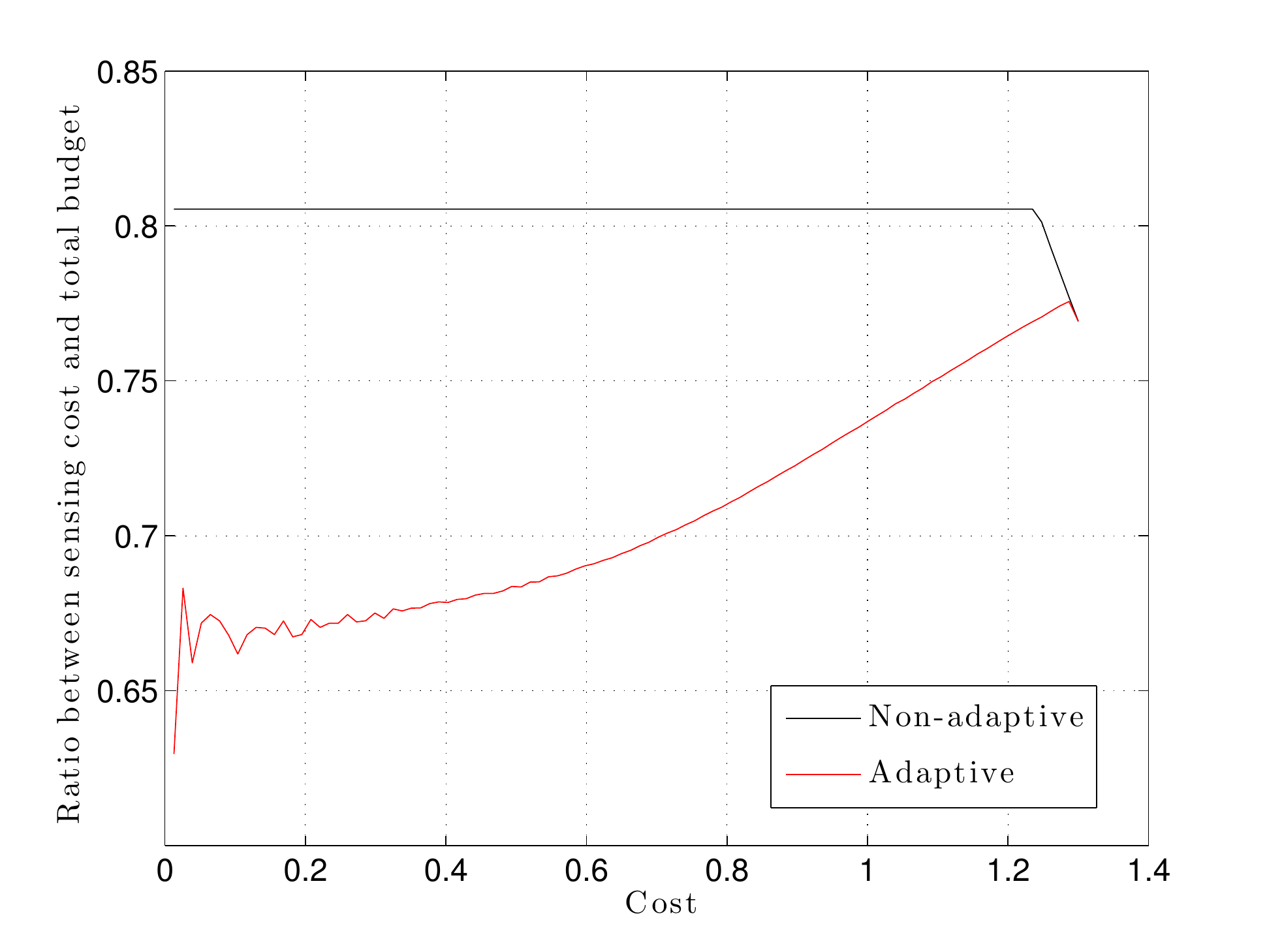}
}
\caption{}
    \label{fig:example}
    \vspace{-9mm}
\end{figure}

This example demonstrates the importance of taking into account the cost of acquisition of state information for network control,
and the importance of using adaptive sensing schemes to optimize the performance of the system.
In the next section, we investigate the more general case with noisy \emph{compressed} measurements collected by the SUs,
$F\geq 1$ frequency bands, $B\geq 1$ control channels employed by the SUs to report their measurements, and feedback from the PUs.

\section{System Model: multiple frequency bands and noisy sensing}\label{model}
In this section, we extend the model considered in the previous section to the more practical setting with multiple frequency bands and noisy sensing.
These factors introduce two difficulties in the problem: 1) due to the potentially large number of spectrum bands that need to be measured, the SUs would incur a 
significant cost to sense each spectrum band independently; in order to reduce this cost, we employ \emph{compressive spectrum sensing}, where
each SU collects a compressed measurement of the spectrum and transmits it to the CC, thus incurring only a fraction of the cost; this technique, in turns, complicates the design of the estimator and controller at the CC.
2) Due to the noise in the spectrum measurements, there is always some residual uncertainty in the current estimate, thus zero-interference operation is not possible (unless the SUs remain idle all the time); additionally, the accuracy of the spectrum estimate will depend on the number of measurements received at the CC, so that, the more the measurements, the better the estimate.
This factor introduces a requirement that $B\geq 1$, \emph{i.e.}, multiple control channels should be employed by the SUs to report their measurements,
as opposed to the noiseless case, where one measurement suffices, and thus $B=1$.
In Secs. \ref{scheduling} and \ref{sensing}, we introduce the models of spectrum scheduling and (compressed) spectrum sensing,  respectively,
and, in Sec. \ref{dynamics}, we characterize the dynamics of the system.

We consider a licensed spectrum composed of $F$ frequency bands, represented in Fig. \ref{TOYEX}.
Let $\mathbf b_k=(\mathbf b_{k,1},\mathbf b_{k,2},\dots,\mathbf b_{k,F})^T$ be the $F$-dimensional
spectrum occupancy (column) vector at time $k$, where $^T$ denotes matrix transpose,
and $\mathbf b_{k,i}\in\{0,1\}$ is the occupancy state of the $i$th band.


The system is time-slotted with slot duration $1$ and operates in two phases~\cite{MicheISIT}:
a \emph{sensing phase}, of duration $d$,
during which the SUs collect compressed distributed measurements of the spectrum occupancy state
and report them to a CC (\emph{e.g.}, a base station) (Sec. \ref{sensing}); followed by 
a \emph{scheduling phase}, of duration $1-d$,
where the SUs access the spectrum based on the  scheduling decision of the CC (Sec. \ref{scheduling}).
\vspace{-5mm}
\subsection{Spectrum Scheduling}
 \label{scheduling}
 In the spectrum scheduling phase,
  the SUs opportunistically access the spectrum based on the traffic vector
 decision $\mathbf r_k$ broadcasted by the CC,
 where $\mathbf r_k{=}(\mathbf r_{k,1},\mathbf r_{k,2},\dots,\mathbf r_{k,F})^T{\in}[0,1]^F$,
 and $\mathbf r_{k,i}$ is the average SU traffic in the $i$th spectrum band at time $k$.
 In each spectrum band, the dynamics of the PU system evolve as described in Sec. \ref{motivation}.

We define the aggregate expected instantaneous
throughput for the SU and PU systems, respectively,
given $\mathbf b_{k}$ and $\mathbf r_k$, as
 \begin{align}\label{rewX}
&T_{X}(\mathbf b_{k},\mathbf r_k)=\sum_{i=1}^FP_{succ}^{(X)}\left(\mathbf b_{k,i},\mathbf r_{k,i}\right),\ X\in\{S,P\}.
  \end{align}
\vspace{-10mm}
\subsection{Spectrum Sensing}
 \label{sensing}
 At the beginning of slot $k$, the spectrum occupancy
 $\mathbf b_k$ is inferred by collecting
noisy
compressed spectrum measurements by the SUs,\footnote{We assume 
that the measurements are collected by the SUs. However, the analysis can be extended to the case where the 
sensors collecting the measurements and the SUs performing spectrum access do not coincide.}
according to the observation model (for SU $j$)
\begin{align}
\label{measmodel}
\mathbf y_{k,j}
=\mathbf a_{k,j}^T\mathbf b_{k}+\mathbf n_{k,j},\ \forall j=1,2,\dots,N_S,
\end{align}
where $\mathbf n_{k,j}{\sim}\mathcal N(0,\sigma_Z^2)$\footnote{For simplicity and without loss of generality, we consider real-valued quantities. The following framework and analysis can be extended to complex-valued ones.}
is Gaussian noise, i.i.d. over time and across SUs,
 $\mathbf a_{k,j}^T$ is the measurement vector, and
 the superscript "T" denotes the matrix or vector transpose. 
 Eq.~(\ref{measmodel}) is the result of filtering over the spectrum band, so that $\mathbf a_{k,j}$ denotes the filtering 
coefficient vector, which includes also the signal attenuation between the PU and the SU.
We assume that $\mathbf a_{k,j}{\sim}\mathcal N(\mathbf 0,\sigma_A^2\mathbf I_F)$,
where $\mathbf I_n$ is the $n\times n$ identity matrix,
and is known to the CC.
\vspace{-5mm}
\begin{remark}
Note that each SU can, in principle, estimate the spectrum occupancy state $\mathbf b_{k}$ based only on local measurements $\mathbf y_{k,j}$. 
However, if $F$ is large, or the measurement is very noisy ($\sigma_Z^2/\sigma_A^2\gg 1$), the estimation accuracy may be very poor.
In contrast, by collecting measurements from a large number of SUs, the CC can estimate $\mathbf b_k$ more accurately.
\end{remark}

The SUs share $B$ orthogonal control channels to report their measurements, resulting in packet losses
if more than one SU transmit on the same channel.
The \emph{SU sensing traffic} in each control channel is $\psi_k$ in slot $k$, whose value
is broadcasted by the CC to the SUs at the beginning of slot $k$,
so that 
the SUs activate with common probability $\alpha_k=\psi_kB/N_S$,
and transmit their measurement in one of the $B$ channels available,
 incurring the sensing-transmission cost $c_{S}$. No cost is incurred by staying inactive. 

We denote the set of SUs that activate to sense and report their measurement
as $\mathcal A_k$ with cardinality $A_k$, and the set of SUs
that report successfully their measurement to the CC as $\mathcal M_k$
with cardinality $M_k$.
We define the probability that  $M_k=m$ measurements are successfully received at the CC, given that $A_k=a$ SUs activate,
as $p_{M|A}(m|a)=\mathbb P(M_k=m|A_k=a)$.
 Moreover, we define the probability that  $M_k=m$ measurements are successfully received at the CC, given the sensing traffic $\psi_k$,
 as $p_{M}(m|\psi_k)=\mathbb E_{A_k}[p_{M|A}(m|A_k)|M_k=m,\psi_k]$, by taking the expectation with respect to $A_k\sim\mathcal B(N_S,\alpha_k)$.
Assuming a collision model for the $B$ control channels and $N_S\to\infty$, the number of measurements received at the CC,
  $M_k$, has binomial distribution with
$B$ trials and success probability $\psi_k e^{-\psi_k}$
 \cite{MicheTSP1}, \emph{i.e.},
\begin{align*}
p_{M}(m|\psi_k){=}
\mathbb P(M_k{=}m|\psi_k){=}\!
\left(\begin{array}{c}\!\!\!B\!\!\!\\\!\!\!m\!\!\!\end{array}\right)\!\!\left(\psi_k e^{-\psi_k}\right)^m\!\!\left(1{-}\psi_k e^{-\psi_k}\right)^{B-m}\!\!\!,
\end{align*}
so that the expected number of measurements received is $\mathbb E[M_k|\psi_k]=B\psi_k e^{-\psi_k}$.
In the following treatment, we use this approximation, although the model can be extended to finite $N_S$ and more general channel models,
by defining $p_{M}(m|\psi_k)$ accordingly.

 Let $\mathbf y_k\in\mathbb R^{M_k}$ be the vector of \emph{compressed measurements} collected in slot $k$.
 From (\ref{measmodel}), 
 \begin{align}\label{yt}
\mathbf y_k=\mathbf A_k^T\mathbf b_{k}+\mathbf n_{k},
\end{align}
where $\mathbf A_k=[\mathbf a_{k,j}]_{j\in\mathcal M_k}$ is the measurement matrix, known to the CC,
and  $\mathbf n_k=[z_{k,j}]_{j\in\mathcal M_k}$ is the noise column vector.
Note that the size of $\mathbf y_k$, $M_k$, is random, due to the probabilistic activation decision of each SU and packet losses resulting from the shared wireless control channels.
\vspace{-5mm}
\subsection{System dynamics}
\label{dynamics}

  \begin{figure}[t]
\centering
\includegraphics[width = .5\linewidth,trim = 0mm 1mm 0mm 1mm,clip=false]{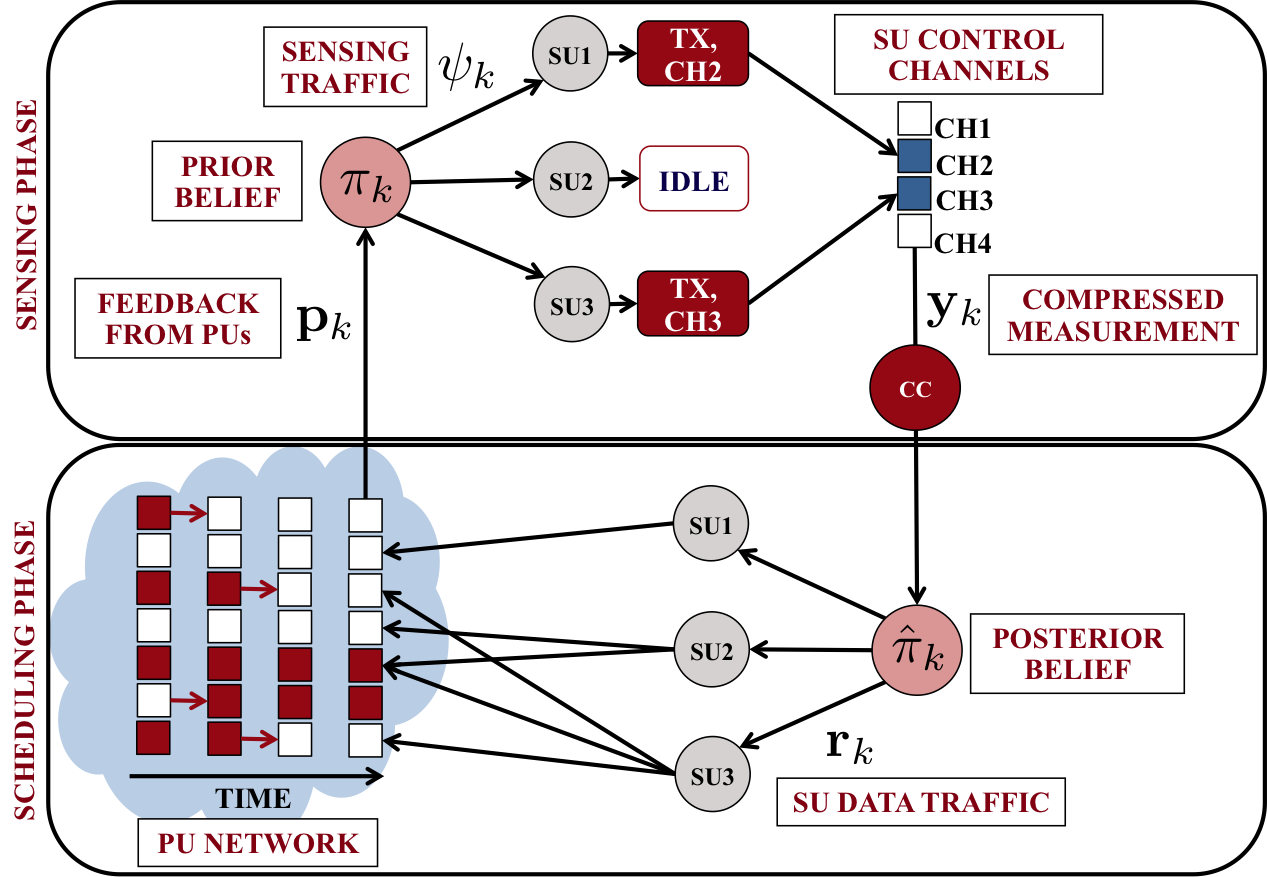}
\vspace{-5mm}
\caption{Block diagram of the system dynamics.}
\label{blockdiag}
\vspace{-9mm}
\end{figure}

The dynamics of the system in each slot can be summarized as follows (see also Fig. \ref{blockdiag}):

\noindent \textbf{1}) \emph{Sensing phase}: At the beginning of slot $k$, the sensing traffic $\psi_k$ is selected by the CC, and broadcasted to the SUs;
each SU collects a compressed measurement  with probability $\alpha_k=\psi_kB/N_S$ and transmits it independently in one of the $B$ control channels;

\noindent \textbf{2})  \emph{Measurement collection}: The measurement vector $\mathbf y_k\sim\mathcal N(\mathbf A_k^T\mathbf b_k,\sigma_Z^2\mathbf I_{R_k})$
is collected at the CC, where $M_k\sim P_M(M_k|\psi_k)$;

\noindent \textbf{3})  \emph{Scheduling phase}: the traffic vector $\mathbf r_k\in [0,1]^F$ is chosen by the CC and broadcasted to the SUs;
each SU transmits its own data with probability $\mathbf q_{k,i}=\mathbf r_{k,i}/N_S$ in the $i$th band;

\noindent \textbf{4}) \emph{State dynamics}: state transitions with 
 probability $P_{B}(\mathbf b_{k+1,i}|\mathbf b_{k,i},\mathbf r_{k,i})$ in the $i$th spectrum band.

We denote the \emph{prior belief}
that $\mathbf b_k{=}\mathbf b$, based on the history collected up to time $k$,
denoted as $\mathcal H_k$,
and before the sensing phase, as $\pi_k(\mathbf b){=}\mathbb P(\mathbf b_k{=}\mathbf b|\mathcal H_k)$.
Similarly, 
we denote the \emph{posterior belief}
that $\mathbf b_k{=}\mathbf b$, given $(\mathcal H_k,\mathbf y_k,\mathbf A_k)$,
 as $\hat\pi_k(\mathbf b){=}\mathbb P(\mathbf b_k{=}\mathbf b|\mathcal H_k,\mathbf y_k,\mathbf A_k)$.
Using (\ref{yt}),  we have that
\begin{align}
\label{posterior}
\hat\pi_k(\mathbf b)
\propto\pi_k(\mathbf b)
\exp\left\{-\frac{1}{2\sigma_Z^2
}\left\Vert\mathbf y_k-\mathbf A_k^T\mathbf b\right\Vert_F^2
\right\},
\end{align}
where $\propto$ denotes proportionality up to a normalization factor,
so that we can write $\hat\pi_k=\hat\Pi(\pi_k,\mathbf A_k,\mathbf y_k)$,
for a proper function $\hat\Pi(\cdot)$.

The CC, at the end of the slot, may overhear the PU acknowledgments of correct (ACK) or incorrect (NACK) reception of the packets, fed back by the PU receivers on
 each channel, denoted as $\mathbf p_{k,i}\in\{\text{ACK},\text{NACK},\emptyset\}$, where 
 $\mathbf p_{k,i}=\emptyset$ if either an erasure occurs (the ACK/NACK message cannot be detected by the CC)
 or the $i$th band was idle, so that no feedback information is reported. We denote the erasure probability as $\epsilon\in [0,1]$,
and the feedback vector collected at the CC at the end of slot $k$ as $\mathbf p_{k}$.
Given $\mathbf b_{k,i}$ and $\mathbf r_{k,i}$, the probability mass function (pmf) of $\mathbf p_{k,i}$,
denoted as
$P_{P}(p|b,r)\triangleq
\mathbb P\left(\mathbf p_{k,i}=p|\mathbf b_{k,i}=b,\mathbf r_{k,i}=r\right)$,
 is given by
\begin{align}
\label{ppp1}
&
P_{P}(\emptyset|b,r)=1-b+b\epsilon,
\ 
P_{P}(\text{ACK}|b,r)=(1-\epsilon)P_{succ}^{(P)}(b,r),
\\&
P_{P}(\text{NACK}|b,r)=(1-\epsilon)(b-P_{succ}^{(P)}(b,r)),
\label{pppend}
\end{align}
where $\chi(\cdot)$ is the indicator function.
Therefore, the pmf of $\mathbf p_{k}$
 given $\mathbf b_{k}$ and $\mathbf r_{k}$ is given by
 \begin{align}
 \label{Distrp}
 \mathbb P\left(\mathbf p_{k}=\mathbf p|\mathbf b_{k,i}=\mathbf b,\mathbf r_{k,i}=\mathbf r\right)
 =\prod_i
 P_{P}(\mathbf p_{i}|\mathbf b_{i},\mathbf r_i).
 \end{align}
 
Given $\mathbf r_k$ and $\mathbf p_{k}$,
 the CC updates the next prior belief as
\begin{align}\label{PI}
&\!\!\!\pi_{k+1}(\mathbf b){=}\!\sum_{\tilde{\mathbf b}}\hat\pi_k(\tilde{\mathbf b})
\prod_{i=1}^F
P_{B|P}(\mathbf b_{i}|\tilde {\mathbf b}_{i}, \mathbf r_{k,i},\mathbf p_{k,i}),
\end{align}
where
$P_{B|P}(b^\prime|b,r,p){\triangleq}\mathbb P(\mathbf b_{k+1,i}{=}b^\prime|\mathbf b_{k,i}{=}b,\mathbf r_{k,i}{=}r,\mathbf p_{k,i}{=}p)$,
given by
\begin{align}
\label{pp1}
&P_{B|P}(1|1,r,\text{ACK})=\theta+(1-\theta)\zeta,\qquad
P_{B|P}(1|1,r,\text{NACK})=1,
\\
\label{pp3}
&P_{B|P}(1|b,r,\emptyset)=
P_{succ}^{(P)}(b,r)\epsilon(\theta+(1-\theta)\zeta)
+(b-P_{succ}^{(P)}(b,r))\epsilon
+(1-b)\zeta,
\end{align}
and $P_{B|P}(0|b,r,p){=}1{-}P_{B|P}(1|b,r,p)$.
Note that ACK/NACK reception ($\mathbf p_{k,i}{\neq}\emptyset$) implies $\mathbf b_{k,i}{=}1$.
We can thus write
$\pi_{k+1}{=}\Pi(\hat\pi_k,\mathbf r_k,\mathbf p_k)$,
for a proper function $\Pi(\cdot)$.
\vspace{-5mm}
\section{Policy definition and Optimization Problem}
\label{optimiz}
In this section, we present the spectrum sensing
and scheduling policies  (Sec. \ref{optimizA}), and we introduce the performance metrics and the optimization problem (Sec. \ref{optimizP}).
Complexity reduction techniques will be 
 carried out in the following Secs.~\ref{optimization} and \ref{complereduction}.
\vspace{-5mm}
\subsection{Spectrum Sensing and Scheduling policies}
\label{optimizA}
In the sensing phase, given $\mathcal H_k$, the CC choses $\psi_k$
 according 
to a sensing policy $\psi_k{=}\psi(\mathcal H_k)$.
In the scheduling phase, given  $\hat{\mathcal H}_k{=}(\mathcal H_k,\mathbf y_k,\mathbf A_k)$,
 the CC selects $\mathbf r_k$
 according to a scheduling policy
$\mathbf r_k{=}\mathbf r(\hat{\mathcal H}_k)$.
We denote the \emph{joint sensing-scheduling} policy as $(\psi,\mathbf r)$.
\vspace{-5mm}
\subsection{Performance metrics and Optimization problem}
\label{optimizP}
We define the average long-term sensing and data transmission cost of the SU network as
\begin{align}
\label{cSensing}
&\bar C_{sensing}(\psi,\mathbf r){\triangleq}\lim_{D\to\infty}\frac{1}{D}\mathbb E\left[\left.\sum_{k=0}^{D-1}
\psi_kBc_S\right|\pi_0\right],
\\
\label{csched}
&\bar C_{sched}(\psi,\mathbf r){\triangleq}\lim_{D\to\infty}\frac{1}{D}\mathbb E\left[\left.\sum_{k=0}^{D-1}
\sum_{i=1}^F\mathbf r_{k,i}c_{TX}\right|\pi_0\right],
\end{align}
respectively,
where $\pi_0$ is the initial prior belief at the beginning of slot $0$.
We define the total cost of sensing and data transmission as 
\begin{align}
\bar C(\psi,\mathbf r)\triangleq
\bar C_{sensing}(\psi,\mathbf r)
+\bar C_{sched}(\psi,\mathbf r).
\end{align}
Finally, we define the average SU/PU throughputs as
\begin{align}
\label{VX}
\bar T_{X}(\psi,\mathbf r)\triangleq\lim_{D\to\infty}\frac{1}{D}\mathbb E\left[\left.\sum_{k=0}^{D-1}
T_{X}(\mathbf b_{k},\mathbf r_k)
\right|\pi_0\right],\ X\in\{S,P\},
\end{align}
where the expectation is with respect to the
realization of $\{\mathbf b_k,\mathbf A_k,\mathbf y_k,\mathbf r_k,\mathbf p_{k}\}$, induced by $(\psi,\mathbf r)$.
The goal is to determine  the joint sensing-scheduling policy $(\psi^*,\mathbf r^*)$ such that
\begin{align}
\label{optprob}
(\psi^*,\mathbf r^*)&=\arg\max_{(\psi,\mathbf r)}
\bar T_{S}(\psi,\mathbf r)
\text{ s.t.}\ \
\bar C(\psi,\mathbf r)\leq \bar C^{\max},\ 
\bar T_{P}(\psi,\mathbf r)\geq \bar T_P^{\min},
\end{align}
where $\bar C^{\max}$ is the maximum cost of sensing and data transmission, and
$\bar T_P^{\min}$ is the minimum PU throughput requirement. 
Alternatively, we consider the Lagrangian formulation 
\begin{align}
\label{optproblag}
(\psi^*,\mathbf r^*)=\arg\max_{(\psi,\mathbf r)}&\ 
\xi\bar T_{S}(\psi,\mathbf r)+(1-\xi)\bar T_{P}(\psi,\mathbf r)
-\lambda\bar C(\psi,\mathbf r),
\end{align}
where the parameters $\lambda{\geq}0$ and $\xi{\in}(0,1)$ capture the desired trade-off
between achieving high PU/SU throughputs and incurring low cost for data transmission and acquisition of state information at the CC.
We have the following theorem.
\vspace{-5mm}
\begin{thm}
\label{suffstat}
The prior belief $\pi_k$ is a sufficient statistic to choose the sensing action $\psi_k$ in slot $k$.
The posterior belief $\hat\pi_k$ is a sufficient statistic to choose the traffic $\mathbf r_k$ in slot $k$.
\end{thm}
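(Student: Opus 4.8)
The plan is to recognize the Lagrangian problem (\ref{optproblag}) as an average-reward partially observable Markov decision process with \emph{two} decision epochs per slot — a sensing epoch, at which $\psi_k$ is chosen, and a scheduling epoch, at which $\mathbf r_k$ is chosen — and then to verify the hypotheses of the classical sufficient-statistic theorem for POMDPs (\emph{e.g.}, \cite{Bertsekas2005}). The key idea is that the belief pair $(\pi_k,\hat\pi_k)$ already encodes all decision-relevant information: $\pi_k$ is the ``state'' seen by the sensing controller, $\hat\pi_k$ is the ``state'' seen by the scheduling controller, and under any policy they form a controlled Markov chain on the probability simplex over $\{0,1\}^F$. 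It then remains to check three things: (i) the per-slot reward/cost contributions are functions of the current belief and current action only; (ii) the one-step transition of the belief is a controlled Markov kernel driven only by the current belief and action; (iii) a standard dynamic-programming induction propagates this structure to the optimal policy.

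For (i), I would argue that the sensing cost $\psi_k B c_S$ depends only on $\psi_k$; the scheduling cost $\sum_{i=1}^F\mathbf r_{k,i}c_{TX}$ depends only on $\mathbf r_k$; and, applying the tower property together with (\ref{posterior}), the conditional expectation of the instantaneous throughput satisfies $\mathbb E\!\left[T_X(\mathbf b_k,\mathbf r_k)\mid\hat{\mathcal H}_k,\mathbf r_k\right]=\sum_{\mathbf b}\hat\pi_k(\mathbf b)\,T_X(\mathbf b,\mathbf r_k)$, which is a function of $(\hat\pi_k,\mathbf r_k)$ alone. Hence the Lagrangian running reward $\xi\,T_S(\mathbf b_k,\mathbf r_k)+(1-\xi)T_P(\mathbf b_k,\mathbf r_k)-\lambda(\psi_kBc_S+\sum_i\mathbf r_{k,i}c_{TX})$ has a conditional expectation depending on history only through $(\hat\pi_k,\psi_k,\mathbf r_k)$.

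For (ii), I would establish the Markov property in two stages. In the sensing epoch, conditioned on $(\mathcal H_k,\psi_k)$, the triple $(M_k,\mathbf A_k,\mathbf y_k)$ has law obtained from $p_M(\cdot\mid\psi_k)$, the prior $\mathbf A_k\sim\mathcal N(\mathbf 0,\sigma_A^2\mathbf I_F)$ (per column), and the likelihood $\mathbf y_k\sim\mathcal N(\mathbf A_k^T\mathbf b_k,\sigma_Z^2\mathbf I)$ averaged over $\mathbf b_k$ with $\mathbf b_k\mid\mathcal H_k\sim\pi_k$; none of this uses $\mathcal H_k$ beyond $\pi_k$, so the law of $\hat\pi_k=\hat\Pi(\pi_k,\mathbf A_k,\mathbf y_k)$ depends on history only through $(\pi_k,\psi_k)$. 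In the scheduling epoch, conditioned on $(\hat{\mathcal H}_k,\mathbf r_k)$, the feedback $\mathbf p_k$ has the law (\ref{Distrp}) averaged over $\mathbf b_k\mid\hat{\mathcal H}_k\sim\hat\pi_k$, so the law of $\pi_{k+1}=\Pi(\hat\pi_k,\mathbf r_k,\mathbf p_k)$ depends on history only through $(\hat\pi_k,\mathbf r_k)$. Concatenating the two stages yields a bona fide controlled Markov chain whose transitions are $\pi_k\!\to\!\hat\pi_k$ (driven by $\psi_k$) and $\hat\pi_k\!\to\!\pi_{k+1}$ (driven by $\mathbf r_k$).

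Finally, for (iii), I would write the finite-horizon value functions of the Lagrangian objective and show by backward induction that they depend on the information states only through $\pi_k$ at the sensing stage and through $\hat\pi_k$ at the scheduling stage, then pass to the average-reward limit; the original constrained formulation (\ref{optprob}) inherits this structure because Lagrangian relaxation does not enlarge the information needed for an optimal policy. The part I expect to require the most care is not any single computation but the precise conditional-independence statement underpinning (ii) — namely ``given $\pi_k$ (resp.\ $\hat\pi_k$), the new observations/feedback are independent of the rest of $\mathcal H_k$'' — which is delicate because $\mathbf y_k$ has a random dimension $M_k$ and because $\mathbf p_k$ is generated from the \emph{same} hidden $\mathbf b_k$ that produced $\mathbf y_k$; I would handle this by conditioning on $\mathbf b_k$ first and then invoking that $\pi_k$ (resp.\ $\hat\pi_k$) is, by definition, the conditional law of $\mathbf b_k$ given $\mathcal H_k$ (resp.\ $\hat{\mathcal H}_k$).
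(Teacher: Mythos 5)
Your proposal is correct and follows essentially the same route as the paper's Appendix~A: verify that the instantaneous expected costs and throughputs depend on the history only through the pair (current belief, current action), and that the belief updates $\hat\pi_k=\hat\Pi(\pi_k,\mathbf A_k,\mathbf y_k)$ and $\pi_{k+1}=\Pi(\hat\pi_k,\mathbf r_k,\mathbf p_k)$ are conditionally independent of the rest of $\mathcal H_k$ (resp.\ $\hat{\mathcal H}_k$) given $(\pi_k,\psi_k)$ (resp.\ $(\hat\pi_k,\mathbf r_k)$), by first conditioning on $\mathbf b_k$ and marginalizing with the belief — exactly the computation the paper carries out for $\mathbf p_k$ and the throughput terms. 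The only difference is cosmetic: your step (iii), the dynamic-programming induction and average-reward passage, is delegated in the paper to the cited standard POMDP sufficient-statistic result rather than spelled out.
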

\vspace{-3mm}
\begin{proof}
See Appendix A.
\end{proof}

We can thus restrict the design to \emph{stationary} policies of the form $\psi_k=\psi(\pi_k)$
and $\mathbf r_k=\mathbf r(\hat\pi_k)$ which depend solely on the respective sufficient statistic,
so that we can rewrite
\begin{align*}
&\bar C_{sensing}(\psi,\mathbf r)
=
Bc_S\lim_{D\to\infty}\frac{1}{D}
\mathbb E\left[\left.\sum_{k=0}^{D-1}
\psi(\pi_k)\right|\pi_0\right],
\\&
\bar C_{sched}(\psi,\mathbf r)
=
c_{TX}\lim_{D\to\infty}\frac{1}{D}\mathbb E\left[\left.\sum_{k=0}^{D-1}
\sum_{i=1}^F\mathbf r_i(\hat\pi_k)\right|\pi_0\right],
\\&
\bar T_{X}(\psi,\mathbf r)
=
\lim_{D\to\infty}\frac{1}{D}\mathbb E\left[\left.\sum_{k=0}^{D-1}
\sum_{\mathbf b}\hat\pi_k(\mathbf b)
T_{X}(\mathbf b,\mathbf r(\hat\pi_k))
\right|\pi_0\right],
\end{align*}
where the expectation is taken with respect to the sequence $\{\pi_k,\hat\pi_k,k\geq 0\}$, induced by $(\psi,\mathbf r)$.
\vspace{-5mm}
 \section{Optimization techniques}
 \label{optimization}
 In this section, we develop optimization techniques to solve 
 the optimization problem (\ref{optprob}) with lower complexity. In particular, in Sec. \ref{optimalDP}, we first
 introduce the optimal DP algorithm, which exploits Theorem \ref{suffstat} to  decouple the optimization of sensing and scheduling, and discuss its enormous complexity.
 Then, in Sec. \ref{parmyop}, we present our proposed partially-myopic scheduling scheme, which enables complexity reduction in the DP optimization.
 Due to the POMDP formulation, in Sec. \ref{complereduction} we will resort to
 belief approximation based on KLD minimization, which enables the use of sparse recovery techniques to estimate the spectrum occupancy.
 \vspace{-5mm}
 \subsection{Optimal DP algorithm: decoupling the optimization of sensing and scheduling}
 \label{optimalDP}
 The optimal solution of (\ref{optproblag}) can by found via DP.
 In particular, we can exploit Theorem \ref{suffstat} to decouple the DP algorithm into two sub-stages,
 which exploit the different sufficient statistic used in the spectrum sensing and scheduling phases, respectively.
\begin{algo}[Optimal sensing-scheduling DP]
\label{aDP}

\noindent {\bf 1)} Initialize $V^{[0]}(\hat \pi)=0$, $\forall \hat\pi$; $l=1$;

\noindent {\bf 2)} {\bf Scheduling optimization stage}: in the $l$th iteration, determine, $\forall \hat\pi$,
\begin{align}
&\hat V^{[l]}(\hat \pi)=
\max_{\mathbf r\in[0,1]^F}\sum_{\mathbf b}\hat\pi(\mathbf b)
\left\{
\xi T_{S}(\mathbf b,\mathbf r)
+(1-\xi)T_{P}(\mathbf b,\mathbf r)
\vphantom{\left[\left.V^{[l-1]}(\Pi(\hat\pi,\mathbf r,\mathbf p))\right|\mathbf b,\mathbf r\right]}
\right.
\nonumber\\&
\left.
\qquad-\lambda c_{TX}\mathbf 1^T\mathbf r 
+\mathbb E\left[\left.V^{[l-1]}\left(\Pi\left(\hat\pi,\mathbf r,\mathbf p\right)\right)\right|\mathbf b,\mathbf r\right]
\right\},
\label{Dpsched}
\end{align}
where the expectation is with respect to the realization of $\mathbf p$, conditioned on $\mathbf b$ and $\mathbf r$;
the maximizer is the optimal SU data traffic in the $l$th stage, $\mathbf r^{[l]}(\hat\pi)$;

\noindent {\bf 3)} {\bf Sensing evaluation stage}: in the $l$th iteration, determine, $\forall\pi$, $\forall m\in\{0,1,\dots,B\}$,
\begin{align*}
V_m^{[l]}(\pi)=\sum_{\mathbf b}\pi(\mathbf b)\mathbb E\left[\left.\hat V^{[l]}\left(\hat\Pi\left(\pi,\mathbf A^{(m)},\mathbf y^{(m)}\right)\right)\right|\mathbf b,m\right],
\end{align*}
where the expectation is with respect to the realization of 
the measurement matrix $\mathbf A^{(m)}\in\mathbb R^{F\times m}$
and measurement vector $\mathbf y^{(m)}|\mathbf b\sim\mathcal{N}(\mathbf A^{(m),T}\mathbf b,\sigma_w^2\mathbf I_m)$,
 conditioned on $\mathbf b$ and the number of measurements received, $m$;
 
\noindent {\bf 4)} {\bf Sensing optimization stage}: in the $l$th iteration, determine, $\forall\pi$,
\begin{align*}
V^{[l]}(\pi){=}\!\!\max_{\psi\in[0,1]}\!
\sum_{\mathbf b}\pi(\mathbf b)\!\!
\left\{
\!{-}\lambda \psi Bc_S
{+}\!\!\sum_{m=0}^{B}p_{M}(m|\psi)V_m^{[l]}(\pi)\!\right\}\!,
\end{align*}
 the maximizer is the optimal sensing traffic in the $l$th stage, $\psi^{[l]}(\pi)$;
 
\noindent {\bf 5)} repeat from step 2) with $l:=l+1$ until convergence; return
policy $(\mathbf r^{[l]},\psi^{[l]})$.
\end{algo}
\vspace{-5mm}
\begin{remark}
The term $V_m^{[l]}(\pi)$ in step 3) represents an evaluation of the cost-to-go function when $m$ measurements are collected at the CC,
 and is independent of the scheme employed by the SUs to report their measurements. On the other hand, the 
 term $V^{[l]}(\pi)$ in step 4) evaluates the cost-to-go function under the specific reporting scheme, as described in Sec. \ref{sensing}.
\end{remark}

Importantly,  Theorem \ref{suffstat}
allows us to relax the \emph{joint} optimization of the sensing and scheduling actions and, instead,
decouple it into two sub-stages: the first one, \emph{scheduling optimization stage},
uses only the posterior belief information to determine the optimal SU data traffic; the second one, \emph{sensing optimization stage},
uses only the prior belief information to determine the optimal SU sensing traffic. The proposed algorithm, thus, effectively captures the sequential structure of the decision making process, \emph{i.e.},
the prior belief drives the sensing traffic, which, in turn, determines the posterior belief, based on which the SU data traffic is scheduled, and so on.

Despite the complexity reduction obtained by decoupling the DP optimization into sub-stages, the DP algorithm has enormous complexity, due to the POMDP formulation and the huge action space.
 In particular, the prior and posterior beliefs $\pi$ and $\hat\pi$
are defined over a $2^F$ dimensional space of all possible realizations of the PU spectrum occupancy state, leading to the curse of dimensionality.
In Sec. \ref{complereduction},
the POMDP formulation is relaxed by projecting the prior and posterior beliefs on a lower dimensional manifold,
thus leading to a compact belief representation.

Additionally, the SU traffic $\mathbf r$ is defined over the set $[0,1]^F$, leading to 
huge complexity in the scheduling optimization stage due to the huge  action space. 
In Sec. \ref{parmyop}, we propose a \emph{partially myopic} scheduling to relax this dimensionality issue.
\vspace{-5mm}
\subsection{Partially Myopic scheduling scheme}
\label{parmyop}
Let $\Lambda_k{=}\sum_{i}\mathbf r_{k,i}$ be the total SU traffic budget in the scheduling phase in slot $k$.
Then, we can decouple the scheduling policy $\mathbf r(\hat\pi)$ into the following sub-policies:
a policy $\Lambda(\hat\pi){\in}[0,F]$ which decides on the total traffic budget allocated as a function of $\hat\pi$,
and a policy $\mathbf z(\hat\pi){\in}\mathcal Z$, which assigns the total budget to the
different spectrum bands, where
$\mathcal Z{\equiv}\{\mathbf z:\sum\mathbf z_i=1,\mathbf z\geq 0\}$.
We can thus rewrite the one-to-one mapping between $\mathbf r$ and $(\Lambda,\mathbf z)$
\begin{align}
\mathbf r(\hat\pi)=\Lambda(\hat\pi)\mathbf z(\hat\pi).
\end{align}

Then, step 2) in the DP Algorithm \ref{aDP} can be replaced with
\begin{align}
\label{asdasd}
&\hat V^{[l]}(\hat \pi)=
\max_{\Lambda,\mathbf z}\sum_{\mathbf b}\hat\pi(\mathbf b)
\left\{
\xi T_{S}(\mathbf b,\Lambda\mathbf z)
+(1-\xi)T_{P}(\mathbf b,\Lambda\mathbf z)
\vphantom{\left[\left.V^{[l-1]}(\Pi(\hat\pi,\Lambda\mathbf z,\mathbf p))\right|\mathbf b,\Lambda\mathbf z\right]}
\right.\nonumber\\&\left.
-\lambda c_{TX}\Lambda
+\mathbb E\left[\left.V^{[l-1]}(\Pi(\hat\pi,\Lambda\mathbf z,\mathbf p))\right|\mathbf b,\Lambda\mathbf z\right]
\right\},
\end{align}
thus yielding the optimal $\Lambda^{[l]}(\hat\pi)$ and $\mathbf z^{[l]}(\hat\pi)$.
Note that the optimization over $\Lambda{\in}[0,F]$ can be carried out with complexity linear in $F$, since the total traffic budget $\Lambda$ is a scalar quantity
taking value in the closed set $[0,F]$.
On the other hand, the optimization over $\mathbf z$ has high complexity since $\mathbf z{\in}\mathcal Z$,
and the action space $\mathcal Z$ grows exponentially with the number of frequency bands $F$.
In order to reduce the complexity,
using a similar approach as in \cite{MicheEH},
 we use a myopic approach to
 approximate $\mathbf z(\hat\pi,\Lambda)$ for a given total budget $\Lambda$,
 namely,
 \begin{align}
\mathbf z(\hat\pi,\Lambda)&=
\arg\max_{\mathbf z\in\mathcal Z}
\sum_{\mathbf b}\hat\pi(\mathbf b)
\left[\xi T_{S}(\mathbf b,\Lambda\mathbf z)
+(1-\xi)T_{P}(\mathbf b,\Lambda\mathbf z)
-\lambda c_{TX}\Lambda\right],
 \end{align} 
 which corresponds to the instantaneous cost in the DP stage  (\ref{Dpsched}),
 without the cost-to-go term $\mathbb E\left[V^{[l-1]}\right]$.
Using (\ref{PsuccPU}), (\ref{PsuccSU}) and (\ref{rewX}),
 we can rewrite this optimization problem as 
  \begin{align}\label{optz}
\mathbf z(\hat{\boldsymbol\beta},\Lambda)&{=}
\arg\max_{\mathbf z\geq 0}
\sum_{i=1}^F\left[(1{-}\hat{\boldsymbol{\beta}}_{i})\Lambda\mathbf  z_{i}+\frac{1-\xi}{\xi}\frac{1{-}\rho_P}{1{-}\rho_S}\hat{\boldsymbol{\beta}}_{i} \right]e^{-\Lambda \mathbf z_{i}}
\text{ s.t.}\sum_i\mathbf z_{i}=1,
 \end{align}
 where we have defined the expected posterior occupancy vector
 \begin{align}
 \hat{\boldsymbol\beta}=\sum_{\mathbf b}\mathbf b\hat\pi(\mathbf b),
 \end{align}
 and we have expressed $\mathbf z(\hat{\boldsymbol\beta},\Lambda)$ as a function of 
 $\hat{\boldsymbol\beta}$ only, rather than of the posterior belief $\hat\pi$.
  
 Additionally, we can further bound the feasible values of the total traffic budget $\Lambda$ as follows.
Let $\mathbf r_{\max}(\hat{\boldsymbol\beta})$ be the solution of the unconstrained optimization problem
  \begin{align}
  \label{asdfasddfdf}
&\mathbf  r_{\max}(\hat{\boldsymbol\beta}){=}
 \arg\max_{\mathbf r\geq 0}
\sum_{\mathbf b}\hat\pi(\mathbf b)
\left[\xi T_{S}(\mathbf b,\mathbf r)+(1-\xi) T_{P}(\mathbf b,\mathbf r)\right]
\\&\nonumber
=
\arg\max_{\mathbf r\geq 0}
\sum_i\left[\xi(1{-}\hat{\boldsymbol\beta}_i)(1{-}\rho_S) \mathbf r_i+(1{-}\xi)\hat{\boldsymbol\beta}_i(1{-}\rho_P)\right]e^{-\mathbf r_i}
=\left[1-\frac{\hat{\boldsymbol\beta}}{1-\hat{\boldsymbol\beta}}
\frac{(1-\xi)(1-\rho_P)}{\xi(1-\rho_S)}
\right]^+,
 \end{align}
 where we have defined $[\cdot]^+=\max\{\cdot,0\}$ and component-wise operations.
Note that $\mathbf  r_{\max}(\hat{\boldsymbol\beta})$ is the value of the SU traffic which maximizes 
the trade-off between the instantaneous PU and SU throughputs, as a function of the expected occupancy $\hat{\boldsymbol\beta}$.
If the SU traffic in the $i$th spectrum band is such that $\mathbf r_{k,i}>\mathbf  r_{\max,i}(\hat{\boldsymbol{\beta}}_{k})$,
then the following undesirable outcomes occur:
a smaller trade-off between PU and SU throughputs is achieved, since
$\mathbf  r_{\max,i}(\hat{\boldsymbol{\beta}}_{k})$ optimizes such trade-off (see (\ref{asdfasddfdf}));
a larger transmission cost is incurred by the SUs in the scheduling phase;
collisions to the PU operating in the $i$th spectrum band are more likely to occur,
so that the $i$th spectrum band is more likely to be occupied in the next slot, due to the retransmission mechanism.
Therefore, we restrict $\mathbf r(\hat\pi_k)$ to take values
$\mathbf 0\leq\mathbf r(\hat\pi_k)\leq  \mathbf r_{\max}(\hat{\boldsymbol\beta}_k)$,
so that
\begin{align}
\Lambda(\hat\pi)\leq\sum_i  \mathbf r_{\max,i}(\hat{\boldsymbol\beta})\triangleq\Lambda_{\max}(\hat{\boldsymbol\beta}),
\end{align}
and,
for a given $\Lambda\in[0,\Lambda_{\max}(\hat{\boldsymbol\beta})]$,
 $\mathbf z\leq \frac{\mathbf r_{\max}(\hat{\boldsymbol\beta})}{\Lambda}$.
Hence,
 (\ref{optz}) is equivalent to
  \begin{align}\label{optz2}
\!\!\!\!\!\mathbf z(\hat{\boldsymbol\beta},\Lambda)&{=}
\arg\max_{\mathbf z}
\sum_{i=1}^F\left[(1{-}\hat{\boldsymbol{\beta}}_{i})\Lambda\mathbf z_{i}+\frac{1-\xi}{\xi}\frac{1{-}\rho_P}{1{-}\rho_S}\hat{\boldsymbol{\beta}}_{i} \right]e^{-\Lambda \mathbf z_{i}}
\ \text{ s.t.}\sum_i\mathbf z_{i}{=}1,\mathbf 0{\leq}\mathbf z{\leq}\frac{\mathbf r_{\max}(\hat{\boldsymbol\beta})}{\Lambda}.
 \end{align}
 The partially myopic scheme and the optimization problem (\ref{optz2})
 have the following properties.
 \vspace{-5mm}
 \begin{thm}
 \label{thm2}
 \noindent {\bf 1)} The optimization problem (\ref{optz2}) is concave; 
 
 \noindent {\bf 2)}
If  $\hat{\boldsymbol\beta}_i\geq\frac{\xi(1-\rho_S)}{(1-\xi)(1-\rho_P)+\xi(1-\rho_S)}$ for some $i$, then
$\mathbf r_i(\hat{\boldsymbol\beta},\Lambda)=0$.

 \noindent {\bf 3)} The SU traffic
 $\mathbf r(\hat{\boldsymbol\beta},\Lambda)=\Lambda\mathbf z(\hat{\boldsymbol\beta},\Lambda)$ is a non-decreasing function of $\Lambda$ (component-wise);

\noindent {\bf 4)}
for a given $\Lambda\in[0,\Lambda_{\max}(\hat{\boldsymbol\beta})]$,
  if $\hat{\boldsymbol\beta}_i>\hat{\boldsymbol\beta}_j$ for some $i\neq j$, then
$
\mathbf r_i(\hat{\boldsymbol\beta},\Lambda)\leq \mathbf r_j(\hat{\boldsymbol\beta},\Lambda)$;
  \end{thm}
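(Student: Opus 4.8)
The plan is to analyze the optimization problem (\ref{optz2}) directly via its Lagrangian, treating the four claims as consequences of the KKT conditions together with monotonicity arguments.

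For part \textbf{1)}, I would show concavity by inspecting each summand separately. Fix $i$ and consider the scalar function
\[
g_i(u)=\left[(1-\hat{\boldsymbol\beta}_i)u+\frac{1-\xi}{\xi}\frac{1-\rho_P}{1-\rho_S}\hat{\boldsymbol\beta}_i\right]e^{-u},
\quad u=\Lambda\mathbf z_i\geq 0.
\]
A direct computation of $g_i''(u)$ gives $g_i''(u)=e^{-u}\bigl[(1-\hat{\boldsymbol\beta}_i)(u-2)-\frac{1-\xi}{\xi}\frac{1-\rho_P}{1-\rho_S}\hat{\boldsymbol\beta}_i\bigr]$; this is not globally $\le 0$. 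The key observation is that the feasible region restricts $u$ to $[0,\mathbf r_{\max,i}(\hat{\boldsymbol\beta})]$, and from (\ref{asdfasddfdf}) one has $\mathbf r_{\max,i}(\hat{\boldsymbol\beta})=\bigl[1-\frac{\hat{\boldsymbol\beta}_i}{1-\hat{\boldsymbol\beta}_i}\frac{(1-\xi)(1-\rho_P)}{\xi(1-\rho_S)}\bigr]^+$, which is precisely the point where $g_i'(u)=0$. Since $g_i$ is unimodal with its unique maximizer at $\mathbf r_{\max,i}(\hat{\boldsymbol\beta})$, on the interval $[0,\mathbf r_{\max,i}(\hat{\boldsymbol\beta})]$ the function $g_i$ is nondecreasing, and a short check of $g_i''$ on that subinterval (using $u\le 1$ there, which forces $u-2<0$) shows $g_i''(u)<0$. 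Hence $g_i$ is strictly concave on the feasible interval, so the objective is a sum of concave functions of $\mathbf z_i$ over a convex (polyhedral) feasible set; concavity of the program follows. Part \textbf{2)} is then immediate: the stated condition $\hat{\boldsymbol\beta}_i\ge\frac{\xi(1-\rho_S)}{(1-\xi)(1-\rho_P)+\xi(1-\rho_S)}$ is algebraically equivalent to $\mathbf r_{\max,i}(\hat{\boldsymbol\beta})\le 0$, i.e. the $^+$ clips it to $0$, so the only feasible value is $\mathbf z_i=0$, whence $\mathbf r_i(\hat{\boldsymbol\beta},\Lambda)=\Lambda\mathbf z_i=0$.

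For parts \textbf{3)} and \textbf{4)} I would work with the KKT stationarity conditions for (\ref{optz2}). Writing $\nu$ for the multiplier of $\sum_i\mathbf z_i=1$ and ignoring momentarily the upper bounds (which are inactive for an interior optimum), stationarity gives $g_i'(\Lambda\mathbf z_i)\cdot\Lambda=\nu$ for all active $i$, i.e. all components share a common value of $g_i'(\cdot)$. For part \textbf{4)}, since $\hat{\boldsymbol\beta}_i>\hat{\boldsymbol\beta}_j$, I would compare the marginal-benefit functions $u\mapsto g_i'(u)$ and $u\mapsto g_j'(u)$ and show that $g_i'(u)<g_j'(u)$ for every $u$ in the relevant range (the band with higher occupancy probability yields lower marginal SU-throughput-minus-cost), so equalizing the marginals forces $\mathbf r_i=\Lambda\mathbf z_i\le\Lambda\mathbf z_j=\mathbf r_j$; the boundary cases where $\mathbf z_i$ hits $0$ or the upper cap are handled by the corresponding complementary-slackness sign of the KKT multiplier, which only reinforces the inequality. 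For part \textbf{3)}, I would use the standard monotone-comparative-statics / parametric-optimization argument: as $\Lambda$ increases, the per-band allocations $\Lambda\mathbf z_i(\hat{\boldsymbol\beta},\Lambda)$ each move monotonically; concretely, differentiating the KKT system in $\Lambda$ (or invoking Topkis-type supermodularity of the reparametrized problem in the variables $\mathbf r_i$ with the single linear coupling $\sum_i\mathbf r_i=\Lambda$) shows that $\partial(\Lambda\mathbf z_i)/\partial\Lambda\ge 0$ componentwise, since a strictly concave separable objective subject to a single budget constraint has an optimal allocation whose components are all nondecreasing in the budget (an extra unit of budget is split nonnegatively among coordinates by concavity, never withdrawn from any).

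The main obstacle I anticipate is part \textbf{1)}: establishing concavity cleanly requires the observation that the feasibility restriction $\mathbf z\le\mathbf r_{\max}(\hat{\boldsymbol\beta})/\Lambda$ is exactly what cuts $g_i$ down to the range where it is concave (and in particular forces $u\le 1<2$), so the argument hinges on linking (\ref{asdfasddfdf}) to the curvature of $g_i$; without that restriction the objective is genuinely nonconcave. The remaining parts are then largely routine KKT bookkeeping, with part \textbf{4)} needing a careful pointwise comparison of the two marginal functions $g_i'$ and $g_j'$, including the clipped/boundary regimes.
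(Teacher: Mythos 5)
Your route for parts \textbf{1)} and \textbf{2)} is essentially the paper's (concavity is rescued by restricting to $\mathbf r_i\le \mathbf r_{\max,i}(\hat{\boldsymbol\beta})$, and part \textbf{2)} is just the clipping in (\ref{asdfasddfdf})), but your second derivative carries a sign error that undermines the one-line justification you give. Writing $c=\frac{1-\xi}{\xi}\frac{1-\rho_P}{1-\rho_S}$ and $g_i(u)=[(1-\hat{\boldsymbol\beta}_i)u+c\hat{\boldsymbol\beta}_i]e^{-u}$, the correct expression is $g_i''(u)=e^{-u}\left[(1-\hat{\boldsymbol\beta}_i)(u-2)+c\hat{\boldsymbol\beta}_i\right]$, with a \emph{plus} sign in front of $c\hat{\boldsymbol\beta}_i$. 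Hence ``$u\le 1$ forces $u-2<0$'' does not by itself give $g_i''<0$: the positive term $c\hat{\boldsymbol\beta}_i$ can dominate (indeed $g_i''(u)>0$ already for $u>1+\mathbf r_{\max,i}$, which can be smaller than $2$). The missing step is exactly where (\ref{asdfasddfdf}) must be used quantitatively: for a band with $\mathbf r_{\max,i}>0$ one has $c\hat{\boldsymbol\beta}_i=(1-\hat{\boldsymbol\beta}_i)(1-\mathbf r_{\max,i})$, so on the feasible range $u\le\mathbf r_{\max,i}$, $g_i''(u)=(1-\hat{\boldsymbol\beta}_i)e^{-u}\,(u-1-\mathbf r_{\max,i})\le-(1-\hat{\boldsymbol\beta}_i)e^{-u}<0$; this is the paper's bound $g_i''(u)\le g_i''(\mathbf r_{\max,i})=-(1-\hat{\boldsymbol\beta}_i)e^{-\mathbf r_{\max,i}}<0$. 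With that one-line repair your part \textbf{1)} is correct, and part \textbf{2)} is fine as stated.

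For parts \textbf{3)} and \textbf{4)} your plan is sound but part \textbf{4)} takes a genuinely different route from the paper. Your part \textbf{3)} (common multiplier for the budget constraint, each clipped component monotone in the multiplier, multiplier monotone in $\Lambda$) is precisely the paper's Lagrangian argument in Appendix B. For part \textbf{4)}, you compare marginals: since $\frac{\partial}{\partial\beta}g_\beta'(u)=e^{-u}[-(1-u)-c]<0$ for $u\le 1$, equalizing $g_i'$ and $g_j'$ at a KKT point forces $\mathbf r_i\le\mathbf r_j$, with boundary cases settled by complementary slackness (and infeasibility of $\mathbf r_i>\mathbf r_{\max,j}\ge\mathbf r_{\max,i}$ when $\mathbf r_j$ is capped). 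The paper instead uses an exchange argument: swapping the allocations of bands $i$ and $j$ changes the objective by $(\hat{\boldsymbol\beta}_i-\hat{\boldsymbol\beta}_j)[\gamma(\mathbf z_j^*)-\gamma(\mathbf z_i^*)]$ with $\gamma(z)=[-\Lambda z+c]e^{-\Lambda z}$ decreasing, contradicting optimality if $\mathbf z_i^*>\mathbf z_j^*$. The swap proof avoids all KKT bookkeeping (its only hidden check is feasibility of the swapped point, which follows from $\hat{\boldsymbol\beta}_i>\hat{\boldsymbol\beta}_j\Rightarrow\mathbf r_{\max,i}\le\mathbf r_{\max,j}$), while your marginal-comparison proof generalizes more readily to non-symmetric per-band objectives; both are valid.
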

  \begin{proof}
  See Appendix B.
  \end{proof}

  Property {\bf 3)} states that, when the budget $\Lambda$ increases, the traffic scheduled in each spectrum band does not decrease;
  Properties {\bf 2)} and {\bf 4)} state that more traffic is scheduled in those bands more likely to be idle,
and no traffic is scheduled in those bands likely to be occupied by a PU.
   All these properties are desirable, since they ensure that the SU traffic is scheduled only to those bands more likely to be idle,
   thus minimizing the interference to the PUs.
      The implication of Property  {\bf 1)} is that (\ref{optz2}) can be solved
 efficiently using standard convex optimization tools~\cite{Boyd}.

   While $\mathbf z(\hat{\boldsymbol\beta},\Lambda)$ is obtained myopically as the solution of problem (\ref{optz2}),
the total traffic budget $\Lambda(\hat\pi)$ is determined optimally as the solution of the DP stage
   \begin{align}
\label{asdasd2}
\hat V^{[l]}(\hat \pi)=&
\max_{\Lambda}\sum_{\mathbf b}\hat\pi(\mathbf b)
\left\{
\xi T_{S}(\mathbf b,\Lambda\mathbf z(\hat{\boldsymbol\beta},\Lambda))
+(1-\xi)T_{P}(\mathbf b,\Lambda\mathbf z(\hat{\boldsymbol\beta},\Lambda))-\lambda c_{TX}\Lambda
\right.\\&\left.\nonumber
+\mathbb E\left[\left.V^{[l-1]}(\Pi(\hat\pi,\Lambda\mathbf z(\hat{\boldsymbol\beta},\Lambda),\mathbf p))\right|\mathbf b,\Lambda\mathbf z(\hat{\boldsymbol\beta},\Lambda)\right]
\right\},
\end{align}  
   which replaces step 2) in the DP Algorithm \ref{aDP} and can be solved with linear complexity, rather than exponential complexity
   as in the original DP step 2); hence the name \emph{partially myopic} scheduling scheme,
   obtained by combining an \emph{optimal DP} solution of the total traffic budget with a \emph{myopic} scheduling of the total traffic budget across
   spectrum bands.
\vspace{-5mm}
 \section{Complexity reduction}
 \label{complereduction}
 Although the dynamics of the spectrum bands evolve independently across frequency, the compressed spectrum measurements
(\ref{yt}) introduce frequency correlation, as is evident from the belief update (\ref{posterior}).
Therefore, in general, the information available at the CC is represented by a belief $\pi(\mathbf b)$,
which may not factorize across frequency bands, resulting in high dimensionality
and huge optimization and operational complexity of the system.

In this section, we propose a compact belief representation, which makes it possible to optimize and operate the system on a lower dimensional subspace.
In particular, in Sec. \ref{seccompact}, we will resort to a compact belief representation via KLD minimization.
Then, in Sec. \ref{MAP}, we will show how this compact representation can be exploited to design sparse recovery techniques to estimate the spectrum occupancy.
Finally, in Sec. \ref{txprob}, we discuss the computation of the transition probabilities in the compact belief representation, which are required
in the DP algorithm.
 \vspace{-5mm}
\subsection{Compact belief representation via KLD minimization}
\label{seccompact}

In order to reduce the high dimensionality entailed by the POMDP formulation, we propose a
 compact state space representation by projecting the belief onto  a low-dimensional manifold via KLD
 minimization.
We approximate the belief $\pi(\mathbf b)$ with the factorized model
\begin{align}
\label{approxbeilef}
{\pi}(\mathbf b)\simeq
\tilde{\pi}(\mathbf b)= \prod_i [\bar\beta^{(\phi(i))}]^{\mathbf b_i}[1-\bar\beta^{(\phi(i))}]^{1-\mathbf b_i},
\end{align}
where $\bar\beta^{(L)},\bar\beta^{(H)}\in[0,1]$ with $\bar\beta^{(L)}\leq\bar\beta^{(H)}$ are low (L) and high (H) probability levels, and $\phi:\{1,2,\dots,F\}\mapsto\{L,H\}$ is a function which maps the 
$i$th spectrum band to indices corresponding to one of the levels $\bar\beta^{(L)}$ or $\bar\beta^{(H)}$. Note that this approximation assumes that
the spectrum bands are statistically independent of each other, and that their probability of being occupied takes two possible values, $\bar\beta^{(L)}$ or $\bar\beta^{(H)}$.
We can alternatively interpret the bands with high probability of occupancy $\bar\beta^{(H)}$ as those detected to be occupied, so that
$P_{FA}=1-\bar\beta^{(H)}$ is the corresponding \emph{false-alarm} probability.
Similarly, the bands with low probability of occupancy $\bar\beta^{(L)}$ 
are those detected to be idle, so that
$P_{MD}=\bar\beta^{(L)}$ is the corresponding \emph{missed-detection} probability.

The approximate belief $\tilde{\pi}(\mathbf b)$
 is parameterized by  $(\bar\beta^{(L)},\bar\beta^{(H)},\phi(\cdot))$. We thus denote $\tilde{\pi}=\mathcal G(\bar\beta^{(L)},\bar\beta^{(H)},\phi)$.
The KLD between $\pi$ and $\tilde{\pi}$ is given by
\begin{align*}
\!\!\!\mathcal D(\pi,\bar\beta^{(L)},\bar\beta^{(H)},\phi)\triangleq D(\pi||\tilde{\pi})=
\sum_{\mathbf b\in\{0,1\}^F}\pi(\mathbf b)\ln\left(\frac{\pi(\mathbf b)}{\tilde{\pi}(\mathbf b)}\right).\!\!
\end{align*}
The goal is, given $\pi$, to find parameters $(\bar\beta^{(L)*},\bar\beta^{(H)*},\phi^*)(\pi)$
such that
\begin{align}
\label{P1}
&(\bar\beta^{(L)*},\bar\beta^{(H)*},\phi^*)(\pi){=}\underset{\bar\beta^{(L)},\bar\beta^{(H)},\phi}{\arg\min}\ \mathcal D(\pi,\bar\beta^{(L)},\bar\beta^{(H)},\phi)
\!\!
\\&\nonumber
=\underset{\bar\beta^{(L)},\bar\beta^{(H)},\phi}{\arg\max}
\sum_i
\left[
{\boldsymbol{\beta}}_{i}\ln \left(\bar\beta^{(\phi(i))}\right)
{+}(1{-}{\boldsymbol{\beta}}_{i})\ln \left(1{-}\bar\beta^{(\phi(i))}\right)
\right],
\end{align}
where we have used (\ref{approxbeilef}) and defined
 ${\boldsymbol{\beta}}{=}\mathbb E[\mathbf b_{k}|\pi_k{=}\pi]$.
 Theorem \ref{thm1} determines the solution of~(\ref{P1}).
\vspace{-10mm}
\begin{thm}
\label{thm1}
The solution of  (\ref{P1}) is given by
\begin{align}
&\bar\beta^{(i)*}(\pi)=\bar\beta^{(i)}(\nu^*(\pi)),\ i\in\{L,H\},
\ 
&\phi^*(\pi;m(i))=\left\{\begin{array}{ll}L,&i\leq\nu^*(\pi),\\H,&i>\nu^*(\pi),\end{array}\right.
\end{align}
where
 \begin{align}
 \label{x1}
&\bar\beta^{(L)}(\nu)=\frac{1}{\nu}\sum_{i=1}^{\nu}\boldsymbol{\beta}_{m(i)},
\  \bar\beta^{(H)}(\nu)=\frac{1}{F-\nu}\sum_{i=\nu+1}^{F}\boldsymbol{\beta}_{m(i)},
 \end{align}
  $m:\{1,2,\dots, F\}\mapsto\{1,2,\dots, F\}$ is a permutation of the entries of 
 $\boldsymbol{\beta}$ in increasing order, \emph{i.e.}, such that
 $\boldsymbol{\beta}_{m(1)}\leq \boldsymbol{\beta}_{m(2)}\leq\dots\leq \boldsymbol{\beta}_{m(F)}$, 
 and $\nu^*(\pi)$ solves
\begin{align}
\label{nuopt}
&\!\!\!\nu^*(\pi)=\!\!\!\!
\underset{\nu\in\{1,2,\dots, F-1\}}{\arg\min}
\nu H_2\left(\bar\beta^{(L)}(\nu)\right)
+(F-\nu)H_2\left(\bar\beta^{(H)}(\nu)\right),
\end{align}
where $H_2(x){=}{-}x\ln \left(x\right){-}(1{-}x)\ln \left(1{-}x\right)$ is the binary entropy function.
\end{thm}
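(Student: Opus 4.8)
The plan is to optimize in two steps: first solve the inner maximization over the probability levels $(\bar\beta^{(L)},\bar\beta^{(H)})$ for a fixed assignment $\phi$, and then optimize over $\phi$. A choice of $\phi$ amounts to a partition of $\{1,\dots,F\}$ into $S_L=\phi^{-1}(L)$ and $S_H=\phi^{-1}(H)$, and the objective in (\ref{P1}) splits as $h_L(\bar\beta^{(L)})+h_H(\bar\beta^{(H)})$ with $h_L(x)=\sum_{i\in S_L}[\boldsymbol{\beta}_i\ln x+(1-\boldsymbol{\beta}_i)\ln(1-x)]$ and $h_H$ defined analogously. Each of $h_L,h_H$ is strictly concave (its second derivative $-\sum_{i\in S_L}[\boldsymbol{\beta}_i/x^2+(1-\boldsymbol{\beta}_i)/(1-x)^2]$ is negative), so the first-order condition $h_L'=0$ gives the unique unconstrained maximizer $\bar\beta^{(L)}=\frac{1}{|S_L|}\sum_{i\in S_L}\boldsymbol{\beta}_i$, the group average, and similarly $\bar\beta^{(H)}$; substituting back yields $h_L(\bar\beta^{(L)})=-|S_L|H_2(\bar\beta^{(L)})$. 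I temporarily drop the ordering constraint $\bar\beta^{(L)}\le\bar\beta^{(H)}$ and reinstate it at the end.

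After this reduction, (\ref{P1}) becomes the combinatorial problem of choosing the partition so as to minimize $g\triangleq\nu H_2(\bar\beta_{S_L})+(F-\nu)H_2(\bar\beta_{S_H})$, where $\nu=|S_L|$, $\bar\beta_{S_L},\bar\beta_{S_H}$ are the group averages, and $\nu\bar\beta_{S_L}+(F-\nu)\bar\beta_{S_H}=s\triangleq\sum_i\boldsymbol{\beta}_i$ is a fixed constant. Fixing $\nu$, I view $g$ as a function of the single real variable $a=\bar\beta_{S_L}$, with $\bar\beta_{S_H}=(s-\nu a)/(F-\nu)$; using that $H_2'(x)=\ln\frac{1-x}{x}$ is strictly decreasing, a short computation gives $g'(a)=\nu[H_2'(a)-H_2'(\bar\beta_{S_H})]$, which is positive for $a<s/F$ and negative for $a>s/F$, while $g''(a)<0$, so $g$ is strictly concave with its maximum at $a=s/F$ (i.e.\ at $\bar\beta_{S_L}=\bar\beta_{S_H}$). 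Since a concave function attains its minimum over any finite set of arguments at the smallest or the largest element of that set, for each $\nu$ the optimal $S_L$ consists of either the $\nu$ smallest or the $\nu$ largest entries of $\boldsymbol{\beta}$; that is, $S_L$ must be a threshold set relative to the sorted order $\boldsymbol{\beta}_{m(1)}\le\dots\le\boldsymbol{\beta}_{m(F)}$. The ``$\nu$ largest'' option, upon exchanging the labels $L\leftrightarrow H$, coincides with the ``$F-\nu$ smallest'' threshold set and produces the same value of $g$, so it is enough to minimize over threshold sets $S_L=\{m(1),\dots,m(\nu)\}$, $\nu\in\{1,\dots,F-1\}$, which is precisely (\ref{nuopt}) with $\bar\beta^{(L)}(\nu),\bar\beta^{(H)}(\nu)$ as in (\ref{x1}). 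The degenerate partitions $\nu\in\{0,F\}$ yield $g=FH_2(s/F)$, which is no smaller than the value at any threshold set because $\bar\beta^{(L)}(\nu)\le s/F$ and $g$ is nondecreasing on $[0,s/F]$; hence the optimum lies in $\{1,\dots,F-1\}$.

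It remains to verify that the ordering constraint is inactive at this optimum: in a threshold set every entry of $S_L$ is at most every entry of $S_H$, so the group averages automatically obey $\bar\beta^{(L)}\le\bar\beta^{(H)}$. Therefore the unconstrained optimum computed above is feasible for (\ref{P1}) and hence optimal, giving the stated $\phi^*$, $\bar\beta^{(L)*}$, and $\bar\beta^{(H)*}$. I expect the main obstacle to be the combinatorial step — proving that an optimal assignment $\phi$ has the sorted/threshold form. The observation that dissolves it is that, after optimizing out the levels, the partition cost $g$ is a strictly concave function of the group average $a=\bar\beta_{S_L}$ along the linear constraint $\nu a+(F-\nu)\bar\beta_{S_H}=s$, so minimizing it drives $a$ to an extreme achievable average, which is exactly a threshold partition; the residual one-dimensional search over $\nu$ is then (\ref{nuopt}).
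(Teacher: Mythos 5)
Your proof is correct, but it runs the two-stage decomposition in the opposite order from the paper, and the key lemma is different. The paper (Appendix C) first fixes the levels $(\bar\beta^{(L)},\bar\beta^{(H)})$ and optimizes the mapping $\phi$: since the objective separates across bands, each band simply compares the two per-band log-likelihoods, which immediately yields a threshold rule in $\boldsymbol\beta_i$ (see (\ref{optimalf})); the threshold structure of $\phi^*$ is therefore essentially free, and only afterwards are the levels optimized for each $\nu$, giving the group means (\ref{x1}) and the entropy search (\ref{nuopt}). You instead eliminate the levels first for an \emph{arbitrary} partition (same group-mean computation, giving the objective $\nu H_2(\bar\beta_{S_L})+(F-\nu)H_2(\bar\beta_{S_H})$), which leaves you with a genuinely combinatorial step the paper never faces: proving that an optimal partition is a threshold set. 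Your resolution of it is sound — for fixed $\nu$ the cost depends on the partition only through the group mean $a$ along the line $\nu a+(F-\nu)b=s$, it is concave (indeed unimodal with maximum at $a=s/F$ since $H_2'$ is decreasing), so its minimum over the finite set of achievable means sits at an extreme mean, i.e.\ the $\nu$ smallest or $\nu$ largest entries, and relabeling $L\leftrightarrow H$ folds the latter case back into a threshold set. The paper's ordering buys simplicity: the pointwise comparison makes the sorted structure immediate and no concavity argument is needed. Your ordering costs that extra lemma but buys a little more rigor at the edges: you never presuppose the label ordering $\bar\beta^{(L)}\le\bar\beta^{(H)}$ (you verify it is inactive at the threshold optimum), and you explicitly dispose of the degenerate partitions $\nu\in\{0,F\}$, justifying the restriction to $\nu\in\{1,\dots,F-1\}$ in (\ref{nuopt}), a point the paper's proof leaves implicit. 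Both routes land on exactly the same formulas (\ref{x1})--(\ref{nuopt}).
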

\begin{proof}
See Appendix C.
\end{proof}

A sufficient statistic to represent the approximate belief $\tilde\pi$
is the  \emph{compressed belief state} (CBS)
$\mathbf s{=}(\bar\beta^{(L)},\bar\beta^{(H)},\nu)$, where $\nu{\in}\{1,2,\dots,F-1\}$ is the number of bands detected as idle,
$P_{FA}{=}1{-}\bar\beta^{(H)}$ and
$P_{MD}{=}\bar\beta^{(L)}$ are
 the false-alarm and missed-detection probabilities for the bands detected as busy and idle, respectively.
In fact, any $\phi(\cdot)$ which maps $\nu$ spectrum bands to the low probability of occupancy $\bar\beta^{(L)}$
and the remaining $F{-}\nu$ spectrum bands  to the high probability of occupancy $\bar\beta^{(H)}$ can be obtained by a proper permutation
of the spectrum bands, which preserves the dynamics of the system, 
due to the symmetry of the spectrum bands across frequency.
Thus, the specific $\phi(\cdot)$ needs not be taken into account, but only the number of bands detected as idle, $\nu$.
We denote the projection operator as
$\mathbf s{=}\mathcal P(\boldsymbol{\beta})$ or $\mathbf s{=}\mathcal P(\pi)$ (used interchangeably).

Therefore, given the prior belief $\pi_k$ and  posterior belief $\hat\pi_k$, we denote the \emph{prior} CBS as 
$\mathbf s_k=\mathcal P(\pi_k)$
and the \emph{posterior} CBS as 
$\hat{\mathbf s}_k=\mathcal P(\hat\pi_k)$, determined as in Theorem~\ref{thm1}.
We then define the policy  $\psi_k=\psi(\mathbf s_k)$
which maps the prior CBS to a value of the sensing traffic $\psi_k$,
and the policy $\Lambda_k=\Lambda(\hat{\mathbf s}_k)$,
which maps the posterior CBS to a value of the total traffic budget $\Lambda_k$.
While the prior and posterior beliefs $\pi_k$ and $\hat\pi_k$ are probability distributions over a space of size $2^F$ (all the possible realizations of the
spectrum occupancy vector $\mathbf b_k$), which scales exponentially with the spectrum size $F$,
 the CBS takes value from a low-dimensional space, which scales linearly with $F$.
Therefore, $\psi(\mathbf s_k)$ and $\Lambda(\hat{\mathbf s}_k)$ can be found with lower complexity than
 $\psi(\pi_k)$ and $\Lambda(\hat\pi_k)$.

Despite the dimensionality reduction achieved by operating based on the  CBS,
computing $\hat\pi_k=\hat\Pi(\tilde\pi_k,\mathbf A_k,\mathbf y_k)$
and $\hat{\boldsymbol{\beta}}_k$
 in the sensing phase via (\ref{posterior})
has exponential complexity. To achieve complexity reduction,  we propose to decouple the estimator from the CC, \emph{i.e.},
the estimator is treated as a black-box 
 with input $(\tilde\pi_k,\mathbf A_k,\mathbf y_k)$, 
which outputs a maximum-a-posteriori (MAP) estimate $\hat{\mathbf b}_k^{(MAP)}$ of $\mathbf b_k$ (Sec. \ref{MAP}) and 
 \emph{posterior false-alarm}
and  \emph{missed-detection probabilities}
for the bands detected as busy and idle, denoted as  $\hat P_{FA,k}$ and $\hat P_{MD,k}$, respectively.
Given $\hat{\mathbf b}_k^{(MAP)}$, $\hat P_{FA,k}$ and $\hat P_{MD,k}$,
 the CC approximates the posterior expected occupancy as
\begin{align}
&\hat{\boldsymbol{\beta}}_{k}=
\hat{\mathbf b}_{k}^{(MAP)}(1-\hat P_{FA,k})+(1-\hat{\mathbf b}_{k}^{(MAP)})\hat P_{MD,k},
\end{align}
from which the CBS $\hat{\mathbf s}_k=(\hat{\bar\beta}_k^{(L)},\hat{\bar\beta}_k^{(H)},\hat \nu_k)$
is determined as
 $\hat{\bar\beta}_k^{(L)}=\hat P_{MD,k}$, $\hat{\bar\beta}_k^{(H)}=1-\hat P_{FA,k}$ and $\hat \nu_k=F-\sum_i\hat{\mathbf b}_{k,i}^{(MAP)}$,
 and the mapping function $\hat\phi_k$ as $\hat\phi_k(i)=L\Leftrightarrow \hat{\mathbf b}_{k,i}^{(MAP)}=0$.
\vspace{-5mm}
\subsection{Spectrum estimation via sparse recovery}
\label{MAP}
Given the prior $\boldsymbol{\beta}_k=\mathbb E[\mathbf b_k|\pi_k]$, $\hat{\mathbf b}_k^{(MAP)}$ solves
\begin{align}
\label{mapopt}
&\hat{\mathbf b}_k^{(MAP)}=\underset{\mathbf b\in\{0,1\}^F}{\arg\max}\ \mathbb P(\mathbf b_k=\mathbf b|\boldsymbol\beta_k,\mathbf A_k,\mathbf y_k)
=
\underset{\mathbf b\in\{0,1\}^F}{\arg\min} \left\Vert\mathbf y_k-\mathbf A_k^T\mathbf b\right\Vert_F^2+2\sigma_Z^2\sum_i\mathbf b_{i}\ln\left(\frac{1-\boldsymbol\beta_{k,i}}{\boldsymbol\beta_{k,i}}\right),\nn
\end{align}
where we have assumed the factorized prior distribution
\begin{align}
 \mathbb P(\mathbf b_k=\mathbf b|\boldsymbol\beta_k)=\prod_i\boldsymbol\beta_{k,i}^{\mathbf b_i}(1-\boldsymbol\beta_{k,i})^{1-\mathbf b_i}.
\end{align}
In particular,
letting ${\mathbf b}_k^{(map)}=\chi(\boldsymbol{\beta}_k\geq 0.5)$ be the \emph{maximum-a-priori} estimate,
 we can rewrite
\begin{align}
\label{asdfsdfh}
\hat{\mathbf b}_k^{(MAP)}={\mathbf b}_k^{(map)}\oplus\hat{\mathbf e}_k^{(MAP)},
\end{align}
where $\hat{\mathbf e}_k^{(MAP)}$ is the correction vector informed by the measurement matrix $\mathbf A_k$ and observation vector $\mathbf y_k$.
By plugging (\ref{asdfsdfh}) into (\ref{mapopt}),
this is
given by the solution of the optimization problem
\begin{align}
\label{MAPe}
&\hat{\mathbf e}_k^{(MAP)}=
\underset{\mathbf e\in\{0,1\}^F}{\arg\min}\left\Vert
\hat{\mathbf y}_k
-\hat{\mathbf A}_k^{T}{\mathbf e}
\right\Vert_F^2
+\boldsymbol{\mu}_k^T
{\mathbf e},
\end{align}
where we have defined
\begin{align}
&\hat{\mathbf y}_k\triangleq\mathbf y_k-\mathbf A_k^T\mathbf b_{k}^{(map)},
\ 
\hat{\mathbf A}_k\triangleq \left(\mathbf I_F-2\mathrm{diag}(\mathbf b_{k}^{(map)})\right)\mathbf A_k,
\end{align}
as the  residual error from the prior estimate and the  corrected measurement matrix, respectively,
and $\boldsymbol{\mu}_k$ is a Lagrangian multiplier column vector with components
\begin{align}
\boldsymbol\mu_{k,i}\triangleq 2\sigma_Z^2\left(1-2\mathbf b_{k,i}^{(map)}\right)\ln\left(\frac{1-\boldsymbol\beta_{k,i}}{\boldsymbol\beta_{k,i}}\right).
\end{align}
Note that the Lagrangian vector $\boldsymbol{\mu}_k$ weights the components of the error vector $\mathbf e$ based on their prior log-likelihood.
As a result, each component $\mathbf e_i$ may be weighted in a different way, according to its prior.
Moreover, from the definition of prior estimate $\mathbf b_{k,i}^{(map)}$, we have that $\boldsymbol\mu_{k,i}\geq 0$, with equality if and only if $\boldsymbol\beta_{k,i}=0.5$.

The optimization problem (\ref{MAPe}) has combinatorial complexity, since the cost function needs to be evaluated for each $\mathbf e\in\{0,1\}^F$.
In order to overcome the combinatorial complexity, we propose the following convex $l_1$ relaxation:
\begin{align}
\label{MAPe2}
&\tilde{\mathbf e}_k=
\underset{\mathbf e\in[0,1]^F}{\arg\min}\left\Vert
\hat{\mathbf y}_k
-\hat{\mathbf A}_k^{T}{\mathbf e}
\right\Vert_F^2
+\boldsymbol{\mu}_k^T
{\mathbf e},
\end{align}
\emph{i.e.}, the optimization is over the convex set $[0,1]^F$, rather than the discrete one $\{0,1\}^F$,
and can thus be solved using convex optimization techniques~\cite{Boyd}.
In particular, it is a  quadratic programming problem minimizing a least-squares term, plus an $\ell_1$ regularization term,
which induces sparsity in the correction vector $\tilde{\mathbf e}_k$. The larger $\boldsymbol\mu_{k,i}$ (\emph{i.e.}, the closer $\boldsymbol\beta_{k,i}$ to $0$ or $1$), the sparser the solution.
Note that $\tilde{\mathbf e}_k$ is not feasible with respect to the original optimization problem (\ref{MAPe}).
A feasible point is thus obtained by projecting $\tilde{\mathbf e}_k$ into the discrete set $\{0,1\}^F$
 using, \emph{e.g.}, the minimum distance criterion $\chi(\tilde{\mathbf e}_k\geq 0.5)$.
This solution is not globally optimal with respect to (\ref{MAPe}), and can be improved using the following 
hill climbing algorithm~\cite{Russell}.
\vspace{-5mm}
\begin{algo}[Hill climbing algorithm]
\noindent {\bf 1)} \emph{Initialization}: $\hat{\mathbf e}_k^{[0]}=\chi(\tilde{\mathbf e}_k\geq 0.5)$, counter $l=0$;

\noindent {\bf 2)} \emph{Improvement step}: at step $l$, compute the vector $\boldsymbol{\Delta}^{[l]}$, with the $i$th component given by
\begin{align}
\nn
&\boldsymbol{\Delta}_i^{[l]}=
\left(2\hat{\mathbf e}_{k,i}^{[l]}-1\right)
\left\{
- 2[\hat{\mathbf A}_k\hat{\mathbf y}_k]_i
+2\sum_{j\neq i}[\hat{\mathbf A}_k\hat{\mathbf A}_k^T]_{i,j}\hat{\mathbf e}_{k,j}^{[l]}
+\left[\hat{\mathbf A}_k\hat{\mathbf A}_k^T\right]_{i,i}
+\boldsymbol{\mu}_{k,i}
\right\}.
\end{align}
Let $i^*=\arg\max_i \boldsymbol{\Delta}_i^{[l]}$;
if $\boldsymbol{\Delta}_{i^*}^{[l]}\geq 0$,
determine a new MAP estimate $\hat{\mathbf e}_k^{[l+1]}$ as
$\hat{\mathbf e}_{k,i}^{[l+1]}=\hat{\mathbf e}_{k,i}^{[l]},\ \forall i\neq i^*$,
$\hat{\mathbf e}_{k,i^*}^{[l+1]}=1-\hat{\mathbf e}_{k,i^*}^{[l]}$,
update the counter $l:=l+1$ and repeat from the \emph{improvement step};
otherwise, return the MAP estimate $\hat{\mathbf e}_k^{(MAP)}:=\hat{\mathbf e}_{k}^{[l]}$.
\end{algo}
The term $\boldsymbol{\Delta}_i^{[l]}$ represents the increase or decrease in the MAP cost function (\ref{MAPe2}), by 
switching the $i$th component of the current MAP estimate, $\hat{\mathbf e}_i^{[l]}$, from 1 to 0, or vice versa,
and keeping all the other components unchanged.
In particular, $\boldsymbol{\Delta}_i^{[l]}$ is the difference in the  cost function (\ref{MAPe2}) between the 
old cost and the new one, so that $\boldsymbol{\Delta}_{i^*}^{[l]}>0$ if an improved estimate is obtained. 
By the definition of $i^*$, if $\boldsymbol{\Delta}_{i^*}^{[l]}\geq 0$, by switching the $i^*$th band of the current MAP estimate, 
the MAP cost function (\ref{MAPe2}) is decreased by the amount $\boldsymbol{\Delta}_{i^*}^{[l]}$,
yielding an improved estimate. Otherwise ($\boldsymbol{\Delta}_{i^*}^{[l]}<0$),
a local optimum has been determined by the algorithm, \emph{i.e.},
any change of one and only one component of the current MAP estimate is sub-optimal.
\vspace{-5mm}
\subsection{CBS transition probabilities}
\label{txprob}
In order to run the DP algorithm based on the CBS, we need to determine the corresponding transition probabilities.
Note that $\hat{\mathbf b}_k^{(MAP)}$ can be written as a
function of the prior expected occupancy $\boldsymbol\beta_{k}$, which maps to the corresponding CBS $\mathbf s_k=\mathcal P(\boldsymbol\beta_{k})$,
and of $(\mathbf A_k,\mathbf y_k)$. We denote this function as 
$\hat{ \mathbf b}_{k}^{(MAP)}=\mathrm{MP}(\boldsymbol\beta_{k},\mathbf A_k,\mathbf y_k)$.
Similarly, the false-alarm and missed-detection probabilities can be written as
functions of $(\boldsymbol\beta_{k},\mathbf A_k,\mathbf y_k)$, and thus
are of difficult evaluation, due to their dependence on the measurements $(\mathbf A_k,\mathbf y_k)$.

Herein, we propose to
marginalize
the false-alarm and missed-detection probabilities
 with respect  the measurements $(\mathbf A_k,\mathbf y_k)$,
for a given value of
the number of bands detected as idle,
$\hat\nu_k=F-\sum_i\mathrm{MP}_i(\boldsymbol\beta_{k},\mathbf A_k,\mathbf y_k)$,
 the number of measurements received, $M_k$, and the CBS $\mathbf s_k$.
 The rationale is that the detection performance of the MAP estimator is mainly driven by the number of measurements collected at the CC,
 rather than the specific observations $(\mathbf A_k,\mathbf y_k)$.
 Equivalently,
\begin{align}
\label{PFA}
&\hat P_{FA}(\mathbf s,m,\hat\nu){\triangleq}
\mathbb E\!\left[\hat P_{FA,k}\!\left|\!\!\!\begin{array}{l}
(\mathbf s_{k},M_k,\hat\nu_k)\\{=}(\mathbf s,m,\hat\nu)
\end{array}\right.\!\!\!\!
\right]
\!\!{=}
\mathbb E\!\!\left[
\frac{\sum_{i}(1{-}\mathbf b_{k,i})\mathrm{MP}_i(\boldsymbol\beta_k,\mathbf A_k^{(m)},\mathbf y_k^{(m)})}{F-\hat\nu}
\left|
\!\!\!\begin{array}{l}
(\mathbf s_{k},M_k,\hat\nu_k)\\{=}(\mathbf s,m,\hat\nu)
\end{array}\!\!\!
\right.\right]\!\!,\!
\\&
\label{PMD}
\!\!\hat P_{MD}(\mathbf s,m,\hat\nu)
{\triangleq}
\mathbb E\!\left[\!\hat P_{MD,k}\!\left|
\!\!\!\begin{array}{l}
(\mathbf s_{k},M_k,\hat\nu_k)\\{=}(\mathbf s,m,\hat\nu)
\end{array}\!\!\!\!
\right.\right]
\!\!{=}
\mathbb E\!\left[\!\left.
\frac{\sum_{i}\mathbf b_{k,i}(1{-}\mathrm{MP}_i(\boldsymbol\beta_k,\mathbf A_k^{(m)},\mathbf y_k^{(m)}))}{\hat\nu}
\right|
\!\!\!\begin{array}{l}
(\mathbf s_{k},M_k,\hat\nu_k)\\{=}(\mathbf s,m,\hat\nu)
\end{array}\!\!\!
\right]\!\!,\!\!
\end{align}
where $\boldsymbol\beta_k$ maps to the CBS $\mathbf s=\mathcal P(\boldsymbol\beta_k)$, up to a random permutation of its entries.
The expectation is taken with respect to the realization of the measurement matrix $\mathbf A^{(m)}\in\mathbb R^{F\times m}$,
the measurement vector $\mathbf y^{(m)}$ (of size $m$),
 and the random permutation of the entries of $\boldsymbol\beta_k$.

Let $P_{\nu}(\hat\nu|\mathbf s,m){\triangleq}\mathbb P(\hat\nu_k{=}\hat\nu|\mathbf s_{k}{=}\mathbf s,M_k{=}m)$ be the pmf of the number of bands detected as idle,
given the prior CBS $\mathbf s$ and the number of measurements received $m$,
 after marginalization with respect to $(\mathbf A_k,\mathbf y_k)$.
 This is given by
 \begin{align}
 \label{pv}
& P_{\nu}(\hat\nu|\mathbf s,m)
 =
\mathbb E\left[\left.\chi\left(F{-}\sum_{i}\mathrm{MP}_i(\boldsymbol\beta_k,\mathbf A_k^{(m)},\mathbf y_k^{(m)}){=}\hat\nu\right)\right|
\begin{array}{l}
(\mathbf s_{k},M_k)\\{=}(\mathbf s,m)
\end{array}
\right].
 \end{align}

We define a neighborhood around the (prior or posterior) CBS $(\bar\beta^{(L)},\bar\beta^{(H)},\nu)$ as
 \begin{align}
\mathcal S_\delta(\bar\beta^{(L)},\bar\beta^{(H)},\nu)
\equiv&\left\{\mathbf s=(x,y,\nu):x\in[\bar\beta^{(L)}-\delta,\bar\beta^{(L)}+\delta],
y\in[\bar\beta^{(H)}-\delta,\bar\beta^{(H)}+\delta]\right\}.
\end{align}
The transition probability from the prior CBS $\mathbf s_k{=}\mathbf s$ to 
the posterior CBS $\hat{\mathbf s}_k\in\mathcal S_\delta(\hat{\mathbf s})$
is given by
\begin{align}
\label{priortopost}
&\mathbb P(\hat{\mathbf s}_k\in\mathcal S_\delta(\hat{\mathbf s})|\mathbf s_k=\mathbf s,\psi_k=\psi)
=\sum_{m=0}^B
P_M(m|\psi)P_{\nu}(\hat\nu|\mathbf s,m)
\\&\nn
\times\chi\left(
\hat P_{FA}(\mathbf s,m,\hat\nu)\in[1-\hat{\bar\beta}^{(H)}-\delta,1-\hat{\bar\beta}^{(H)}+\delta]
\right)
\chi\left(
\hat P_{MD}(\mathbf s,m,\hat\nu)\in[\hat{\bar\beta}^{(L)}-\delta,\hat{\bar\beta}^{(L)}+\delta]
\right).
\end{align}

Given the posterior expected occupancy $\hat{\boldsymbol\beta}_k$, the SU traffic $\mathbf r_k$,
and the feedback $\mathbf p_k$,
 the prior expected occupancy in the next slot, $\boldsymbol\beta_{k+1}$,  is given by
  \begin{align}
&\boldsymbol\beta_{k+1,i}=
 \mathbb P(\mathbf b_{k+1,i}=1|\mathbf r_{k,i}=\mathbf r_{i},\mathbf p_{k,i}=\mathbf p_{i},\hat{\boldsymbol\beta}_{k,i})
 \nn\\&
 =
\frac{
P_{B|P}(1|0,\mathbf r_{i},\mathbf p_{i})P_{P}(\mathbf p_{i}|0,\mathbf r_{i})(1-\hat{\boldsymbol\beta}_{k,i})
+P_{B|P}(1|1,\mathbf r_{i},\mathbf p_{i})P_{P}(\mathbf p_{i}|1,\mathbf r_{i})\hat{\boldsymbol\beta}_{k,i}
 }
 {
 P_{P}(\mathbf p_{i}|0,\mathbf r_{i})(1-\hat{\boldsymbol\beta}_{k,i})
 +P_{P}(\mathbf p_{i}|1,\mathbf r_{i})\hat{\boldsymbol\beta}_{k,i}
 },
 \end{align}
 where $P_{B|P}(\cdot)$ and $P_{P}(\cdot)$ are defined in (\ref{pp1})-(\ref{pp3}) and (\ref{ppp1})-(\ref{pppend}),
 so that we can write $\boldsymbol\beta_{k+1}=\beta(\hat{\boldsymbol\beta}_{k},\mathbf r_{k},\mathbf p_{k})$
 for a proper function $\beta(\cdot)$.
 This, in turn, maps to the prior CBS $\mathbf s_{k+1}=\mathcal P(\boldsymbol\beta_{k+1})$.
 Therefore, for a given total traffic budget $\Lambda_k=\Lambda$, the transition probability from the posterior CBS $\hat{\mathbf s}_k=\hat{\mathbf s}=(\hat{\bar\beta}^{(L)},\hat{\bar\beta}^{(H)},\nu)$ to the prior CBS
 $\mathbf s_{k+1}\in\mathcal S_\delta(\mathbf s)$ in the next slot,
 is given by
  \begin{align}
  \label{posttoprior}
&\mathbb P(\mathbf s_{k+1}\in\mathcal S_\delta(\mathbf s)|\hat{\mathbf s}_k=\hat{\mathbf s},\Lambda_k=\Lambda)
\\&\nn
=
\sum_{\mathbf b,\mathbf p\in\{0,1\}^F}
\prod_i  P_{P}(\mathbf p_{i}|\mathbf b_{i},\Lambda\mathbf z_i(\hat{\boldsymbol\beta},\Lambda))
\hat{\boldsymbol\beta}_{i}^{\mathbf b_i}(1-\hat{\boldsymbol\beta}_{i})^{1-\mathbf b_i}
\chi\left(\mathcal P\left(\beta\left(\hat{\boldsymbol\beta},\Lambda\mathbf z(\hat{\boldsymbol\beta},\Lambda),\mathbf p\right)\right)\in\mathcal S_\delta(\mathbf s)\right),
\nn
\end{align}
where we have marginalized with respect to $\mathbf b_k$ and $\mathbf p_k$,
and $\hat{\boldsymbol\beta}$ is given by $\hat{\boldsymbol\beta}_i=\hat{\bar\beta}^{(L)},i\leq \hat\nu$,
$\hat{\boldsymbol\beta}_i=\hat{\bar\beta}^{(H)},i>\hat\nu$, so that $\mathcal P(\hat{\boldsymbol\beta})=\hat{\mathbf s}$.

These probabilities, along with (\ref{PFA}), (\ref{PMD}) and  (\ref{pv}), do not admit a closed form analytical expression, but can be computed numerically via Monte-Carlo simulation.
\vspace{-5mm}
\section{Numerical Results}\label{sec:numres}
In this section, we present numerical results for a system with
 parameters: number of frequency bands $F{=}20$; 
SU and  PU failure probabilities $\rho_S{=}\rho_P{=}0.1$; probability of new PU arrival $\zeta{=}0.095$;
probability of a new data packet for an active PU $\theta{=}0.95$;
number of SUs $N_S{=}100$; number of  control channels for the SUs $B{=}5$;
 SU sensing and data transmission costs $c_S{=}c_{TX}{=}1$;
variance of the entries of the measurement matrix $\sigma_A^2{=}1$;
variance of the measurement noise $\sigma_Z^2{=}1/20$; erasure probability $\epsilon{=}0.9$.
The performance is evaluated over $2\times 10^4$ slots.

Fig.~\ref{figSUvsPU}.a plots the trade-off between the PU and SU throughputs,
for different values of the total cost $\bar C$ (accounting for both the cost of sensing and of data transmission).
Fig.~\ref{figSUvsPU}.b plots the fraction of the total cost  $\bar C$  that is spent for spectrum sensing ($\bar C_{sensing}$),
as a function of the PU throughput and total cost $\bar C$. Note that the remaining fraction of the total cost is spent for actual data transmission.
 We notice that the throughput trade-off improves for higher values of the cost $\bar C$.
This is expected since, when more resources are available (higher $\bar C$), there are more opportunities to perform data transmission for the SUs.
At the same time, for higher values of  the cost $\bar C$, a larger fraction of this cost is spent for spectrum sensing.
The reason is that, in order to accommodate more traffic for the SUs, with minimal interference to the PUs, the SUs need to acquire more accurate spectrum estimates,
hence the sensing cost increases accordingly.
Additionally, when the requirement on the throughput degradation to the PUs is very strict ($\bar T_P$ approaching the maximal value),
most of the resources are spent for spectrum sensing.
This is because, in order to meet the demanding requirement on the throughput degradation to the PU, the SU traffic
should be allocated only on those spectrum bands which are idle almost surely. Such
low level of uncertainty, in turn, demands significant sensing resources.

\begin{figure}[t]
    \centering
    \subfigure[Trade-off between PU and SU throughput for different values of the cost constraint $\bar C$ (accounting both sensing and data transmission costs).]
    {
\includegraphics[width=.45\linewidth,trim = 1mm 1mm 1mm 1mm,clip=true]{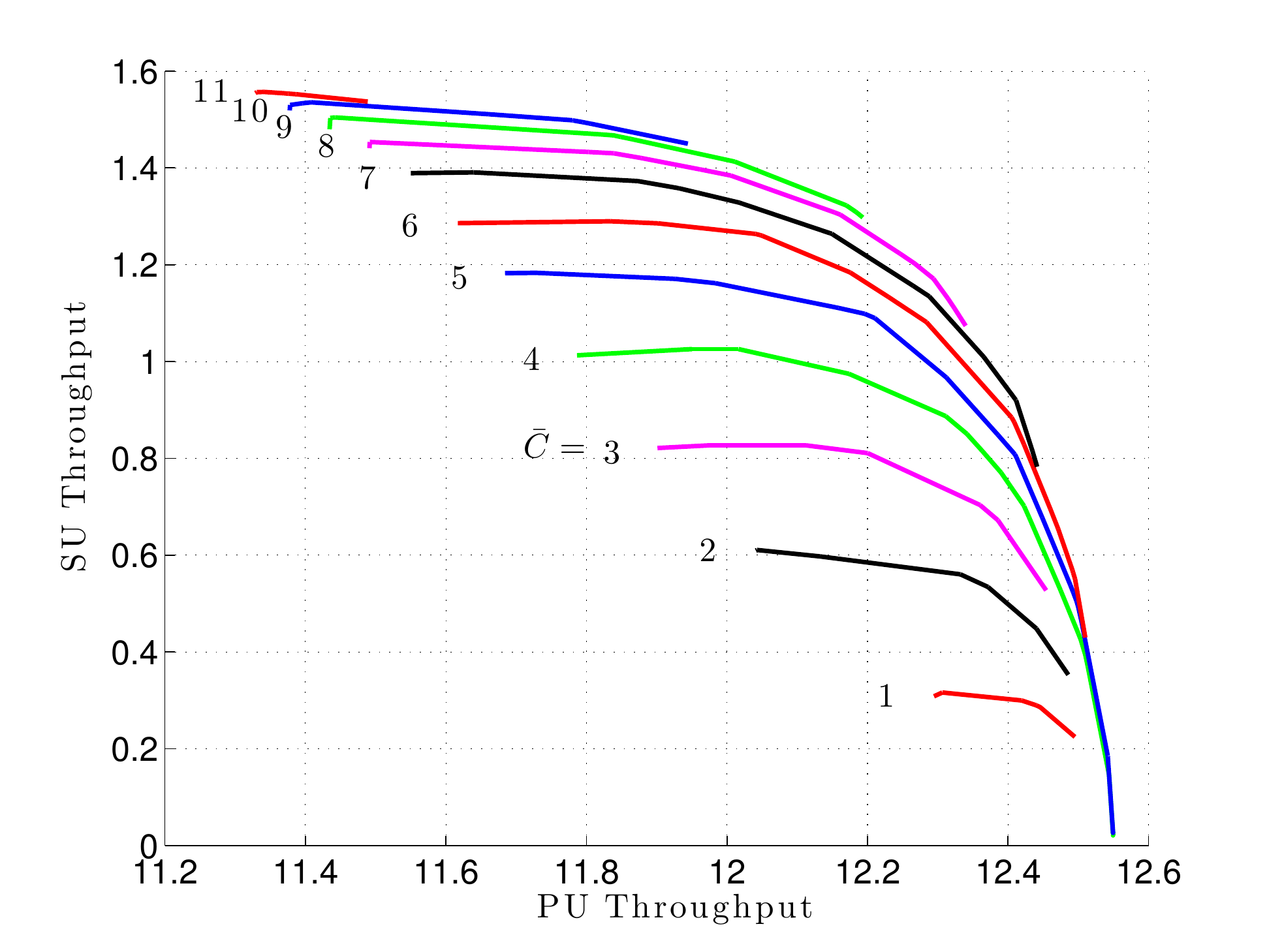}
}
    \subfigure[Fraction of the total cost $\bar C$ spent for spectrum sensing ($\bar C_{sensing}$),
for different values of the PU throughput and total cost $\bar C$.]
    {
\includegraphics[width=.45\linewidth,trim = 1mm 1mm 1mm 1mm,clip=true]{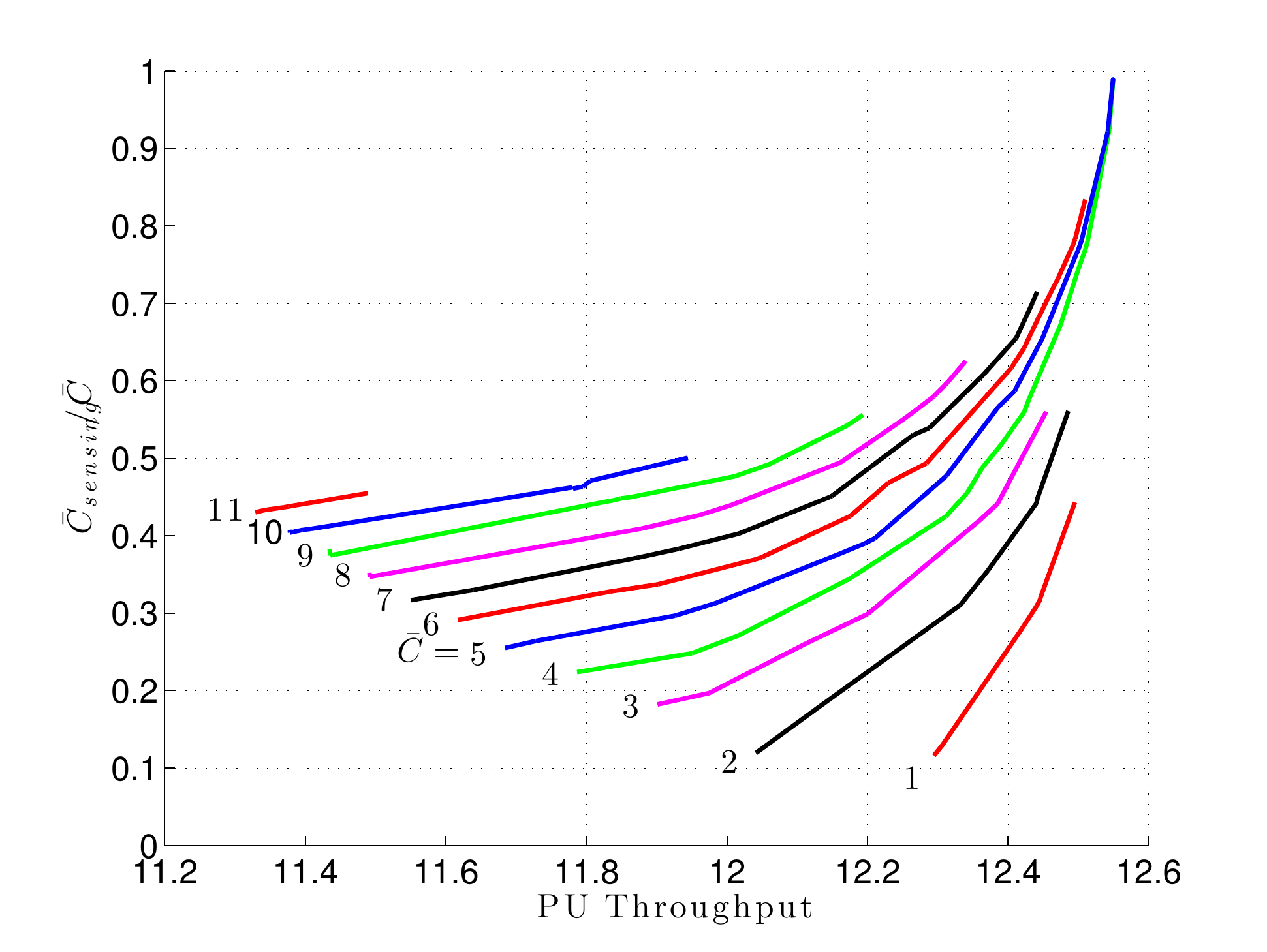}
}
\caption{}
\label{figSUvsPU}
\vspace{-9mm}
\end{figure}

Fig. \ref{samples}.a plots the total traffic allocated, $\Lambda_k$, as a function of the expected number of 
   occupied spectrum bands, $\sum_i\hat{\boldsymbol\beta}_{k,i}$, for the case $\lambda=0.025$ and $\xi=0.7$.
   Each sample corresponds to a given value of the posterior belief $\hat{\boldsymbol\beta}_{k}$, lying on the low-dimensional manifold generated by the 
   CBS.
   We notice that, as the expected number of occupied spectrum bands increases, the total traffic allocated tends to decrease. 
   In fact, when more spectrum bands are expected to be occupied, there are fewer opportunities to occupy the remaining idle bands by the SUs.
   Fig. \ref{samples}.b plots the SU sensing traffic per channel, $\psi_k$, 
    and the expected number of measurements received at CC, $\mathbb E(M_k)$, as a function of the entropy of the prior belief state, $\sum_iH_2(\boldsymbol\beta_{k,i})$,
    where $\boldsymbol\beta_{k}$ lies on the low-dimensional manifold generated by the CBS. We notice that, as the entropy increases, \emph{i.e.}, the amount of uncertainty on the current spectrum occupancy increases, the SU sensing traffic also increases, and thus, more measurements are collected at the CC in order to reduce this uncertainty. 
    The sensing resources are focused in those regions of the belief state where the spectrum occupancy state is more uncertain,
    yielding energy efficient resource utilization.
\begin{figure}[t]
    \centering
    \subfigure[Total traffic allocated $\Lambda_k$ as a function of the expected number of 
   occupied spectrum bands, $\sum_i\hat{\boldsymbol\beta}_{k,i}$.]
    {
\includegraphics[width=.45\linewidth,trim = 1mm 1mm 1mm 8mm,clip=true]{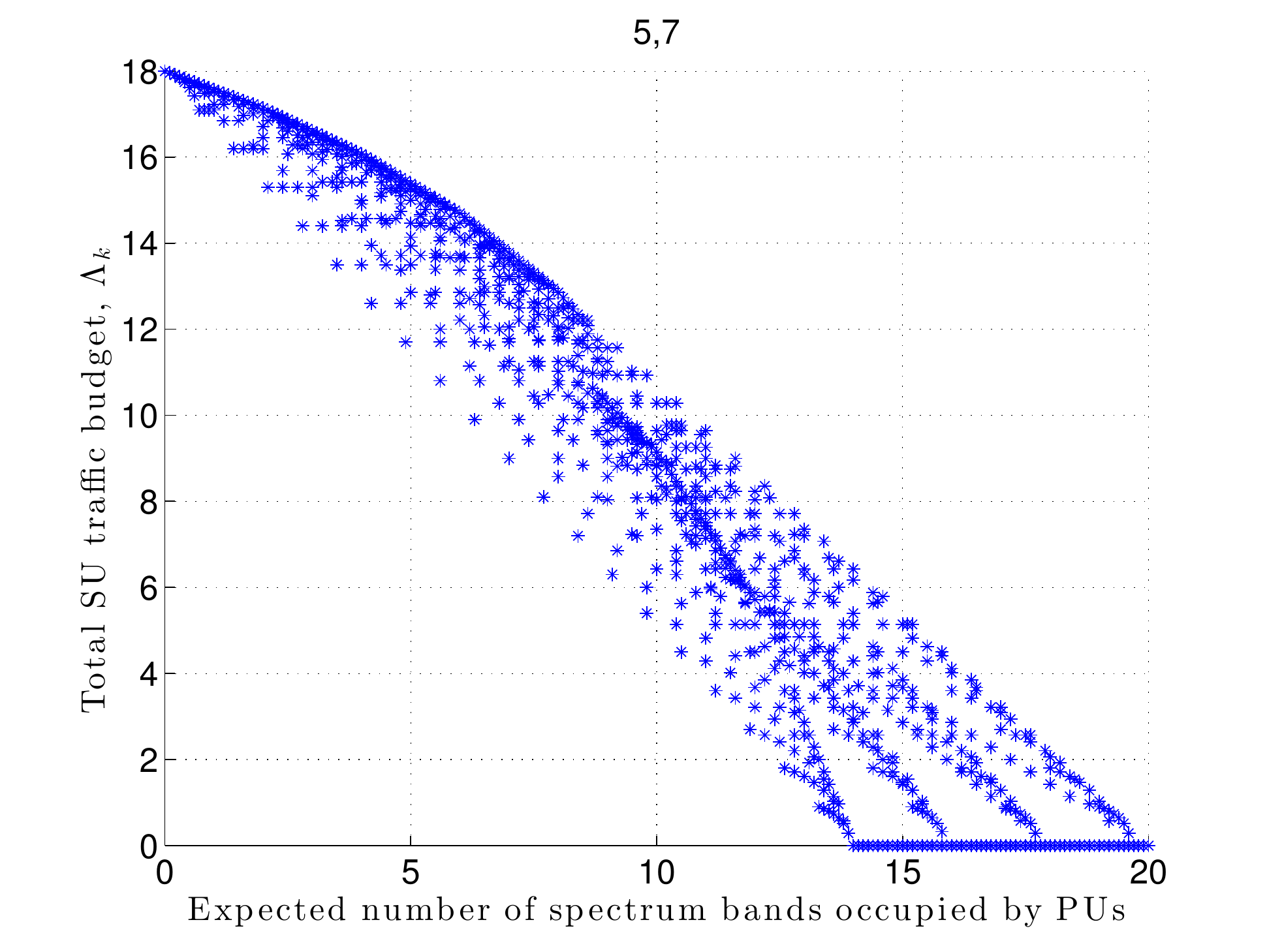}
}
    \subfigure[SU sensing traffic per channel $\psi_k$ and expected number of measurements received at CC $\mathbb E(M_k)$ as a function of the entropy of the prior belief state, $\sum_iH_2(\boldsymbol\beta_{k,i})$.  ]
    {
\includegraphics[width=.45\linewidth,trim = 1mm 1mm 1mm 8mm,clip=true]{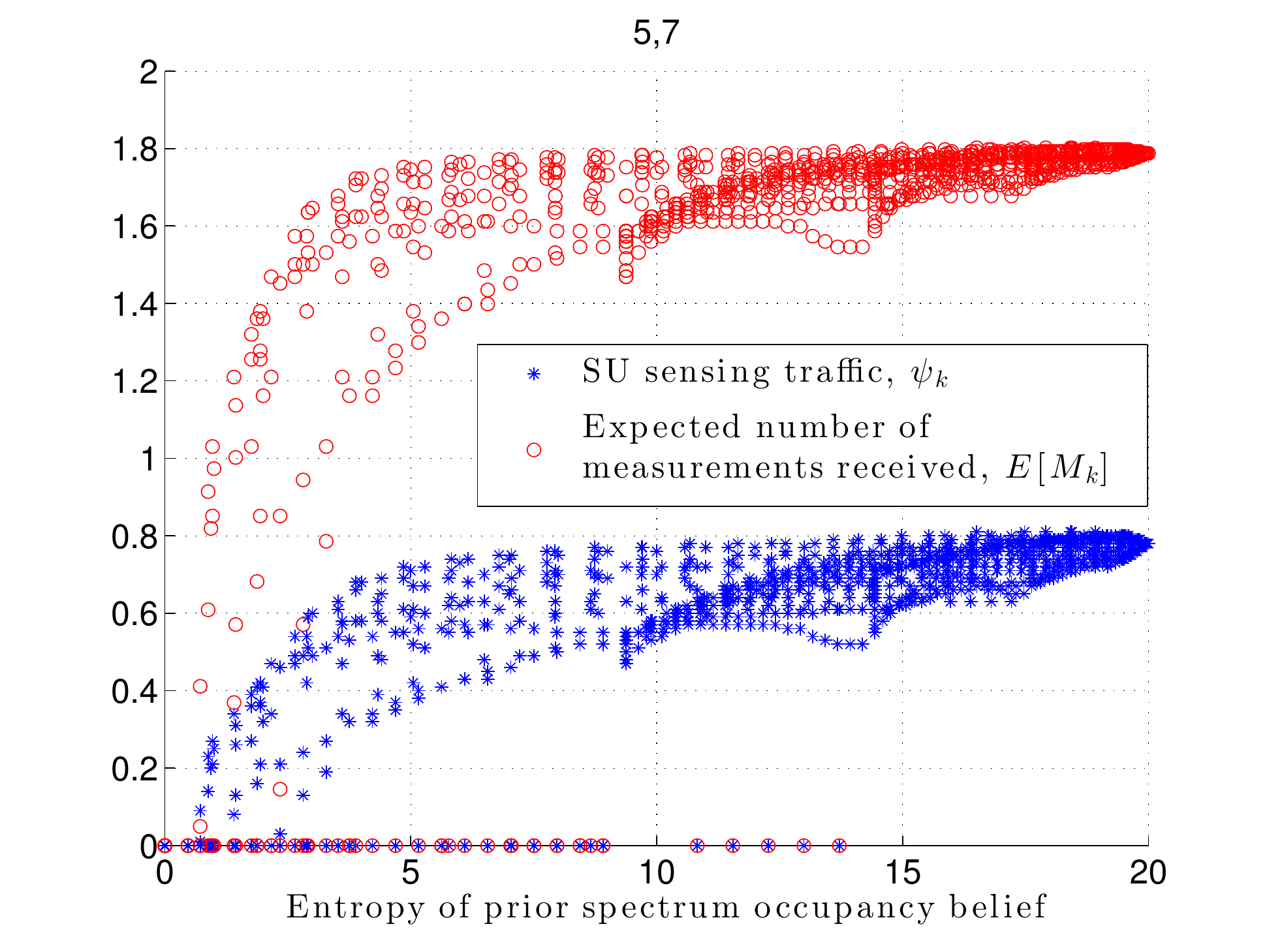}
}
\caption{}
\label{samples}
\vspace{-9mm}
\end{figure}
\vspace{-5mm}
\section{Conclusions}\label{conclu}
In this paper, we have presented a cross-layer framework for joint distributed spectrum sensing, estimation and scheduling in a wireless network
composed of SUs that opportunistically access portions of the spectrum left unused by a licensed network of PUs.
In contrast to much prior work, we jointly address sensing and control, wherein the sensing affects the quality of the measurements.
 Inference of the underlying spectrum occupancy state is obtained by collecting compressed measurements at the CC from nearby SUs, and via local ACK/NACK feedback information from the PUs.
   In order to reduce the huge optimization and operational complexity due to the POMDP formulation, we have proposed
 a technique to project the belief state onto a low-dimensional manifold via the minimization of the Kullback-Leibler divergence.
 We have proved the optimality of  a two-stage decomposition,
 which enables the \emph{decoupling} of the optimization of sensing and scheduling.
Additionally, we have proposed a partially myopic optimization scheme, 
  which can be solved efficiently using convex optimization tools.
    Simulation results demonstrate how the proposed framework optimally balances the cost of acquisition of state information via distributed spectrum sensing and
  the cost of data transmission incurred by the SUs, while achieving the best trade-off between PU and SU throughput under the resource constraints available.
  \vspace{-6mm}
\section*{Appendix A: Proof of Theorem \ref{suffstat}}
The instantaneous expected sensing cost (see (\ref{cSensing})) is given by $\psi_kBc_S$,
and is thus independent of $\mathcal H_k$, given $\psi_k$.
Moreover, the distribution of $\mathbf b_k$ given $\mathcal H_k$ is given by
$\mathbb P(\mathbf b_k=\mathbf b|\mathcal H_k)=\pi_k(\mathbf b)$.
Therefore, the prior belief $\pi_k$ is a sufficient statistic to select $\psi_k$ in slot $k$ \cite{POMDP}.

After selecting $\psi_k$ and collecting $(\mathbf A_k,\mathbf y_k)$, the CC 
computes the posterior belief $\hat\pi_k=\hat\Pi(\pi_k,\mathbf A_k,\mathbf y_k)$ as in (\ref{posterior}).
In the scheduling phase, given the SU traffic $\mathbf r_k$ and the history $\hat{\mathcal H}_k$,
the PU feedback $\mathbf p_k$ has probability distribution
\begin{align}
&\mathbb P\left(\left.\mathbf p_k{=}\mathbf p\right|\mathbf r_k,\hat{\mathcal H}_k\right)
=\!\!\!\!\!\!
\sum_{\mathbf b\in\{0,1\}^F}\!\!\!\!\!\!
\mathbb P\left(\left.\mathbf p_k{=}\mathbf p\right|\mathbf b_k{=}\mathbf b,\mathbf r_k,\hat{\mathcal H}_k\right)
\mathbb P\left(\left.\mathbf b_k{=}\mathbf b\right|\mathbf r_k,\hat{\mathcal H}_k\right)
\nonumber\\&
=\!\!\!\!\!\!
\sum_{\mathbf b\in\{0,1\}^F}
\prod_i
 P_{P}(\mathbf p_{i}|\mathbf b_{i},\mathbf r_{k,i})
\hat\pi_k(\mathbf b)
=\mathbb P\left(\left.\mathbf p_k=\mathbf p\right|\mathbf r_k,\hat\pi_k\right),
\nonumber
\end{align}
where we have used (\ref{Distrp}) and the definition of $\hat\pi_k$,
thus the distribution is independent of $\hat{\mathcal H}_k$ given $(\mathbf r_k,\hat\pi_k)$.
Therefore, the next prior belief $\pi_{k+1}=\Pi(\hat\pi_k,\mathbf r_k,\mathbf p_k)$
is statistically independent of $\hat{\mathcal H}_k$, given $(\mathbf r_k,\hat\pi_k)$.
The instantaneous expected  data transmission cost (\ref{csched}), and SU/PU throughputs (\ref{rewX}) given $(\mathbf r_k,\hat{\mathcal H}_k)$,
are given by
\begin{align}
&
\mathbb E\left[\left.
c_{TX}\sum_{i=1}^F\mathbf r_{k,i}
\right|\mathbf r_k,\hat{\mathcal H}_k\right]=c_{TX}\sum_{i=1}^F\mathbf r_{k,i},
\\&
\mathbb E\left[\left.T_{X}(\mathbf b_{k},\mathbf r_k)\right|\mathbf r_k,\hat{\mathcal H}_k\right]
=\!\!\!\!\!\!\!
\sum_{\mathbf b\in\{0,1\}^F}\!\!\!\!\!
\mathbb P\left(\left.\mathbf b_{k}=\mathbf b\right|\mathbf r_k,\hat{\mathcal H}_k\right)
T_{X}(\mathbf b,\mathbf r_k)
\nn
=\!\!\!\!\!\!\!
\sum_{\mathbf b\in\{0,1\}^F}\!\!\!\!\!
\hat\pi_k(\mathbf b)T_{X}(\mathbf b,\mathbf r_k).
\nn
\end{align}
All these metrics of interest are functions of $\mathbf r_k$ and $\hat\pi_k$ only.
Therefore, the posterior belief state $\hat\pi_k$ is a sufficient statistic to schedule the traffic $\mathbf r_k$ in slot $k$ \cite{POMDP}.
\vspace{-5mm}
\section*{Appendix B: Proof of Theorem \ref{thm2}}
\emph{Proof of Property 2)}
From (\ref{asdfasddfdf}), if $\hat{\boldsymbol\beta}_i\geq\frac{\xi(1-\rho_S)}{(1-\xi)(1-\rho_P)+\xi(1-\rho_S)}$.
\\
\emph{Proof of Property 1)}
Consider the optimization problem 
  \begin{align}
  \label{profdsf}
&\max_{\mathbf 0\leq \mathbf r\leq \mathbf r_{\max}}
\sum_{i=1}^F\left[(1{-}\hat{\boldsymbol{\beta}}_{i})\mathbf  r_{i}+\frac{1-\xi}{\xi}\frac{1{-}\rho_P}{1{-}\rho_S}\hat{\boldsymbol{\beta}}_{i} \right]e^{- \mathbf r_{i}}
\text{ s.t.}\ \sum_i\mathbf r_{i}=\Lambda,
 \end{align}
obtained by replacing $\mathbf r{=}\Lambda\mathbf z$ in (\ref{optz2}).
Let $i$ such that $\hat{\boldsymbol\beta}_i{<}\frac{\xi(1-\rho_S)}{(1-\xi)(1-\rho_P)+\xi(1-\rho_S)}$, hence
$\mathbf r_{\max,i}{>}0$.
The second derivative of the objective function with respect to $\mathbf r_i$ is given by
 \begin{align}
 g(\mathbf  r_{i})\triangleq
\left[(1{-}\hat{\boldsymbol{\beta}}_{i})(\mathbf  r_{i}-2)+\frac{1-\xi}{\xi}\frac{1{-}\rho_P}{1{-}\rho_S}\hat{\boldsymbol{\beta}}_{i} \right]e^{- \mathbf r_{i}}.
 \end{align}
 Since $\mathbf  r_{i}\leq \mathbf r_{\max,i}$, we then obtain
   \begin{align}
 g(\mathbf  r_{i})\leq g(\mathbf r_{\max,i}(\hat{\boldsymbol\beta}))\leq -(1{-}\hat{\boldsymbol{\beta}}_{i})e^{- \mathbf r_{\max,i}(\hat{\boldsymbol\beta})}<0.
 \end{align}
Therefore, the objective function in (\ref{profdsf}) is a concave function of $\mathbf r$.
Since the constraint set $\{\mathbf 0\leq \mathbf r\leq \mathbf r_{\max},\sum_i\mathbf r_{i}=\Lambda\}$
is convex, the resulting optimization problem (\ref{profdsf}) is convex.
\\
\emph{Proof of Property 3)}
We denote the maximizer of (\ref{profdsf}) as $\mathbf r^*(\Lambda)$, which obeys $\mathbf 0\leq \mathbf r^*(\Lambda)\leq \mathbf r_{\max}$.
Solving (\ref{profdsf}) with the Lagrange multiplier method, we obtain
$\max_{\mathbf 0\leq \mathbf r\leq \mathbf r_{\max}} f(\mathbf r,\mu)$,
where
\begin{align}
\label{lagrangian}
f(\mathbf r,\mu)\triangleq&
\sum_{i=1}^F\left[(1{-}\hat{\boldsymbol{\beta}}_{i})\mathbf  r_{i}+\frac{1-\xi}{\xi}\frac{1{-}\rho_P}{1{-}\rho_S}\hat{\boldsymbol{\beta}}_{i} \right]e^{- \mathbf r_{i}}
+\mu\left(\sum_i\mathbf r_{i}-\Lambda\right),
\end{align}
whose maximizer is denoted as $\tilde{\mathbf r}(\mu)$. The optimal Lagrange multiplier $\mu^*(\Lambda)$ must be such that
\begin{align}
\sum_i\tilde{\mathbf r}_i(\mu^*(\Lambda))=\Lambda,
\end{align}
yielding $\mathbf r^*(\Lambda)=\tilde{\mathbf r}(\mu^*(\Lambda))$.
We now solve the Lagrangian problem (\ref{lagrangian}) for a specific $\mu$.
Since the objective function is a concave function of $\mathbf r:\mathbf 0\leq \mathbf r\leq \mathbf r_{\max}$,
we have the following cases:

\noindent a) If $\left.\frac{\mathrm df(\mathbf r,\mu)}{\mathrm d \mathbf r_i}\right|_{\mathbf r_i=0}\leq 0$, then $\tilde{\mathbf r}_i(\mu)=0$. Equivalently,
\begin{align}
\label{p1}
-(1-\hat{\boldsymbol{\beta}}_{i})+\frac{1-\xi}{\xi}\frac{1{-}\rho_P}{1{-}\rho_S}\hat{\boldsymbol{\beta}}_{i}\geq \mu;
\end{align}

\noindent b) If $\left.\frac{\mathrm df(\mathbf r,\mu)}{\mathrm d \mathbf r_i}\right|_{\mathbf r_i=\mathbf r_{\max,i}}\geq 0$,
then $\tilde{\mathbf r}_i(\mu)=\mathbf r_{\max,i}$.
Equivalently,
\begin{align}
\label{p2}
&\left[(1{-}\hat{\boldsymbol{\beta}}_{i})(\mathbf r_{\max,i}-1)+\frac{1-\xi}{\xi}\frac{1{-}\rho_P}{1{-}\rho_S}\hat{\boldsymbol{\beta}}_{i} \right]e^{- \mathbf r_{\max,i}}
\leq \mu;
\end{align}

\noindent c) 
Otherwise, $\tilde{\mathbf r}_i(\mu)$ is the only $\mathbf r_i\in[0,\mathbf r_{\max,i}]$ such that 
\begin{align}
\label{p3}
-\left[(1{-}\hat{\boldsymbol{\beta}}_{i})(\mathbf r_{i}-1)+\frac{1-\xi}{\xi}\frac{1{-}\rho_P}{1{-}\rho_S}\hat{\boldsymbol{\beta}}_{i} \right]e^{- \mathbf r_{i}}
+\mu=0.
\end{align}
 $\tilde{\mathbf r}_i(\mu)$ is a non-decreasing function of $\mu$
since, by increasing $\mu$, the  inequality in  (\ref{p1}) becomes tighter
and the  inequality in  (\ref{p2}) becomes looser,
and the left-hand expression of (\ref{p3}) is a decreasing function of $\mathbf r_{i}$.
Hence, $\mu^*(\Lambda)$ is a non-decreasing function of $\Lambda$, so that,
for $\Lambda_1\geq\Lambda_2$,
\begin{align}
\mathbf r^*(\Lambda_1)=
\tilde{\mathbf r}(\mu^*(\Lambda_1))\geq \tilde{\mathbf r}(\mu^*(\Lambda_2))=\mathbf r^*(\Lambda_2).
\end{align}
Property 3) is thus proved.
\\
\emph{Proof of Property 4)}
Let $\mathbf z^*$ be the optimizer of (\ref{optz2}) and let $\hat{\boldsymbol{\beta}}_{i}>\hat{\boldsymbol{\beta}}_{j}$ for some $i\neq j$. Assume by contradiction that
$\mathbf z_i^*>\mathbf z_j^*$.
Now, we define a new SU traffic $\tilde{\mathbf z}$ as follows:
\begin{align}
&\tilde{\mathbf z}_l=\mathbf z_l^*,\ l\notin\{i,j\},\ 
\tilde{\mathbf z}_i=\mathbf z_j^*,
\ 
\tilde{\mathbf z}_j=\mathbf z_i^*.
\end{align}
Equivalently, the SU traffic allocated to the $i$th and $j$th bands under $\mathbf z^*$
is switched under the new traffic scheme $\tilde{\mathbf z}$. Note that
$\sum_i\tilde{\mathbf z}_i=\sum_i{\mathbf z}_i^*=1$,
so that $\tilde{\mathbf z}$ obeys the total traffic constraint, and is thus feasible with respect to (\ref{optz2}).
Let $v(\mathbf z)$ be the value of the objective function in (\ref{optz2}) as a function of $\mathbf z$.
Due to the optimality of $\mathbf z^*$, we have that $v(\tilde{\mathbf z})-v(\mathbf z^*)\leq 0$.
We show that this cannot hold, hence proving the contradiction.
We have
\begin{align}
v(\tilde{\mathbf z})-v(\mathbf z^*)=
(\hat{\boldsymbol{\beta}}_{i}-\hat{\boldsymbol{\beta}}_{j})[\gamma(\mathbf  z_{j}^*)-\gamma(\mathbf  z_{i}^*)],
\end{align}
where we have defined, for $z\in[0,1]$,
\begin{align}
\gamma(z)\triangleq \left[-\Lambda z+\frac{1-\xi}{\xi}\frac{1{-}\rho_P}{1{-}\rho_S} \right]e^{-\Lambda z}.
\end{align}
Note that $\gamma(n)$ is a decreasing function of $n\in[0,1]$. Therefore, since $\mathbf z_i^*>\mathbf z_j^*$,
we have that $\gamma(\mathbf z_i^*)<\gamma(\mathbf z_j^*)$, hence
$v(\tilde{\mathbf z})>v(\mathbf z^*)$, yielding a contradiction.
\vspace{-5mm}
\section*{Appendix C: Proof of Theorem \ref{thm1}}
The optimization problem (\ref{P1}) can be decomposed into the following two stages.
First, given $(\bar\beta^{(L)},\bar\beta^{(H)})$ with $\bar\beta^{(L)}\leq\bar\beta^{(H)}$, determine the mapping function $\phi(\cdot)$ such that
\begin{align}
\label{optimalf}
&\phi^*(\bar\beta^{(L)},\bar\beta^{(H)})=\underset{\phi}{\arg\min}\ \mathcal D(\pi,\bar\beta^{(L)},\bar\beta^{(H)},\phi)
\\&\nonumber
=\underset{\phi}{\arg\max}
\sum_i
\left[
{\boldsymbol{\beta}}_{i}\ln \left(\bar\beta^{(\phi(i))}\right)
+(1-{\boldsymbol{\beta}}_{i})\ln \left(1-\bar\beta^{(\phi(i))}\right)
\right].
\end{align}
Second, determine $\bar\beta^{(L)*}$ and $\bar\beta^{(H)*}$ with optimal mapping $\phi^*(\bar\beta^{(L)},\bar\beta^{(H)})$ into  (\ref{P1}), yielding
\begin{align*}
(\bar\beta^{(L)},\bar\beta^{(H)})^*=\underset{\bar\beta^{(L)},\bar\beta^{(H)}}{\arg\min}\ \mathcal D(\pi,\bar\beta^{(L)},\bar\beta^{(H)},\phi^*(\bar\beta^{(L)},\bar\beta^{(H)})).
\end{align*}
The solution to the intermediate problem (\ref{optimalf}) is trivially given by
\begin{align}
&\phi(i)=L\Leftrightarrow
{\boldsymbol{\beta}}_{i}\ln \left(\bar\beta^{(L)}\right)
{+}(1-{\boldsymbol{\beta}}_{i})\ln \left(1-\bar\beta^{(L)}\right)
{\geq} {\boldsymbol{\beta}}_{i}\ln \left(\bar\beta^{(H)}\right)
{+}(1-{\boldsymbol{\beta}}_{i})\ln \left(1-\bar\beta^{(H)}\right),
\end{align}
yielding
\begin{align}
\phi(i)=L\Leftrightarrow
{\boldsymbol{\beta}}_{i}
\leq
\left(
1+\frac{\ln \left(\bar\beta^{(H)}\right)-\ln \left(\bar\beta^{(L)}\right)}{\ln \left(1-\bar\beta^{(L)}\right)-\ln \left(1-\bar\beta^{(H)}\right)}
\right)^{-1}.
\end{align}
Note that the solution is of threshold type. Therefore, defining the permutation function $m(\cdot)$ as in the statement of the theorem,
there exists $\nu$ such that $\phi(m(i))=L$ for $i\leq \nu$, and $\phi(m(i))=H$ for $i>\nu$.
We can thus restate the optimization problem (\ref{P1}) by enforcing this solution, yielding
\begin{align}
\label{P2}
(\bar\beta^{(L)*},\bar\beta^{(H)*},\nu^*)
%
=&\underset{\bar\beta^{(L)},\bar\beta^{(H)},\nu}{\arg\max}
\sum_{i=1}^\nu
\left[
{\boldsymbol{\beta}}_{m(i)}\ln \left(\bar\beta^{(L)}\right)
{+}(1{-}{\boldsymbol{\beta}}_{m(i)})\ln \left(1{-}\bar\beta^{(L)}\right)
\right]
%
\\&
+\sum_{i=\nu+1}^F
\left[
{\boldsymbol{\beta}}_{m(i)}\ln \left(\bar\beta^{(H)}\right)
+(1-{\boldsymbol{\beta}}_{m(i)})\ln \left(1-\bar\beta^{(H)}\right)
\right],
\nonumber
\end{align}
so that $\phi^*(i)=L\Leftrightarrow m(i)\leq \nu^*$.
We solve (\ref{P2}) with respect to $(\bar\beta^{(L)},\bar\beta^{(H)})$ first, for a fixed $\nu$, and then optimize over $\nu$. 
We obtain
\begin{align}
\label{P2b1}
&\bar\beta^{(L)}(\nu)=\underset{\bar\beta^{(L)}}{\arg\max}
\sum_{i=1}^\nu
\left[
{\boldsymbol{\beta}}_{m(i)}\ln \left(\bar\beta^{(L)}\right)
+(1-{\boldsymbol{\beta}}_{m(i)})\ln \left(1-\bar\beta^{(L)}\right)
\right]
=
\frac{1}{\nu}\sum_{i=1}^\nu{\boldsymbol{\beta}}_{m(i)},
\\&
\label{P2b2}
\bar\beta^{(H)}(\nu){=}\underset{\bar\beta^{(H)}}{\arg\max}
\!\!\!\sum_{i=\nu+1}^F\!\!\!
\left[
{\boldsymbol{\beta}}_{m(i)}\ln \left(\bar\beta^{(H)}\right)
{+}(1-{\boldsymbol{\beta}}_{m(i)})\ln \left(1-\bar\beta^{(H)}\right)
\right]
{=}
\frac{1}{F-\nu}\sum_{i=\nu+1}^F{\boldsymbol{\beta}}_{m(i)},
\end{align}
yielding (\ref{x1}). 
By replacing $(\bar\beta^{(L)}(\nu),\bar\beta^{(H)}(\nu))$ into (\ref{P2}), we finally obtain
$\nu^*$ as in (\ref{nuopt}).
\vspace{-5mm}

\bibliographystyle{IEEEtran}
\bibliography{IEEEabrv,References}

\end{document}